\newtheorem{theorem}{Theorem}
\newtheorem{question}{Question}
\newtheorem{lemma}[theorem]{Lemma}
\newtheorem{obs}[theorem]{Observation}
\newtheorem{case}{Case}
\numberwithin{subcase}{case}
\newcommand{\source}{{\sf s}}
\newcommand{\target}{{\sf t}}
\renewcommand{\to}[1]{\overset{\text{#1}}{\rightsquigarrow}}
\newcommand{\reconf}[1]{\overset{\text{#1}}{\leftrightsquigarrow}}
\newcommand{\DSR}{DSR\textsubscript{TS}}
\newcommand{\DSRTJ}{DSR\textsubscript{TJ}}
\newcommand{\VCR}{VCR\textsubscript{TS}}
\newcommand{\ISR}{ISR\textsubscript{TS}}
\newcommand{\mno}{\emph{mno}}
\def\problem#1#2#3{%
\begin{tcolorbox}[enhanced,
  attach boxed title to top left={xshift=2mm,yshift=-3mm,yshifttext=-2mm},
  colback=gray!20!white, colframe=gray!70!black, colbacktitle=gray!70!black,
  boxrule=0.2mm, boxed title style={size=small}, title={#1}]
 \begin{description}[noitemsep,font=\textsf]
  \item[Instance:] {#2}
  \item[Question:] {#3}
 \end{description}
\end{tcolorbox}
}
\begin{document}

\title{Dominating sets reconfiguration under token sliding}

\author[1]{Marthe Bonamy}
\author[2]{Paul Dorbec}
\author[1]{Paul Ouvrard}
\affil[1]{Univ. Bordeaux, Bordeaux INP, CNRS, LaBRI, UMR5800, F-33400 Talence, France \thanks{\{marthe.bonamy, paul.ouvrard\}@u-bordeaux.fr}}
\affil[2]{Normandie Univ, UNICAEN, ENSICAEN, CNRS, GREYC, 14000 Caen, France \thanks{paul.dorbec@unicaen.fr}}

\date{}

\maketitle

\begin{abstract}
Let $G$ be a graph and $D_\source$ and $D_\target$ be two dominating sets of $G$ of size $k$. Does there exist a sequence  $\langle D_0 = D_\source, D_1, \ldots, D_{\ell-1}, D_\ell = D_\target \rangle$ of dominating sets of $G$ such that $D_{i+1}$ can be obtained from $D_i$ by replacing one vertex with one of its neighbors? In this paper, we investigate the complexity of this decision problem. We first prove that this problem is PSPACE-complete, even when restricted to split, bipartite or bounded treewidth graphs. On the other hand, we prove that it can be solved in polynomial time on dually chordal graphs (a superclass of both trees and interval graphs) or cographs.
\end{abstract}

\section{Introduction}

\paragraph{General introduction.}
Reconfiguration problems arise when, given an instance of a problem, we want to find a step-by-step transformation (called a \emph{reconfiguration sequence}) between two feasible solutions such that all intermediate solutions are also feasible. Unfortunately, such a transformation does not always exist and some solutions may even be \emph{frozen}, i.e., they can not be modified at all. In this context, two natural questions arise: (i) When can we ensure that there exists such a transformation? (ii) What is the complexity of deciding whether a reconfiguration sequence exists?

Interest in combinatorial reconfiguration steadily increased during the last decade. Reconfiguration of several problems, including \textsc{Coloring} \cite{BONAMY2019179,CERECEDA2008913,FEGHALI2019169}, \textsc{Independent Set} \cite{Bonsma16,BMP17,LokshtanovM18}, \textsc{Dominating Set} \cite{Haddadan:2016:CDS:3010167.3010682,LOKSHTANOV2018122,Mouawad2017,SMN14} and \textsc{Satisfiability} \cite{Gopalan:2009:CBS:1654348.1654357, MNR17} have been studied. For an overview of recent results on reconfiguration problems, the reader is referred to the surveys of van den Heuvel \cite{Heuvel13} and Nishimura \cite{Nishimura18}. In this article, we focus on the reconfiguration of dominating sets.

A dominating set is a set of vertices such that every vertex not in the set has a neighbor in it.
One can represent a dominating set as a set of tokens, where exactly one token is placed on each vertex that is part of the dominating set.
Then, modifying a dominating set corresponds to shifting the tokens according to some rule, called a {\em reconfiguration rule}. In the literature, three kinds of operations have been mainly studied:

\begin{enumerate}
    \item Token Addition and Removal ({\sf TAR}): one can add or remove a token as long as the total number of tokens does not go beyond a given threshold;
    \item Token Jumping ({\sf TJ}): one can move a token to any vertex of the graph;
    \item Token Sliding ({\sf TS}): one can slide a token along an edge, i.e., one moves a token to a neighbor of its current location.
\end{enumerate}

One can observe that in the last two models, the size of each solution remains constant at any time, as opposed to what happens in the {\sf TAR} model. In this article, we are mostly interested in the Token Sliding model.

We define the reconfiguration graph for domination, denoted $\mathcal{R}_k(G)$ as follows: the vertices of $\mathcal{R}_k(G)$ are the dominating sets of size $k$ and there is an edge between two vertices if and only if one can go from the first to the second thanks to the reconfiguration rule that we consider (token sliding in our case). Three natural problems can be identified:

\begin{enumerate}
    \item  The \emph{reachability} problem: given a graph $G$ and two dominating sets $D_\source$ and $D_\target$, is there a path between $D_\source$ and $D_\target$ in $\mathcal{R}_k(G)$? In other words, does there exist a reconfiguration sequence between $D_\source$ and $D_\target$?

    \item The \emph{connectivity} problem: given a graph $G$, is the reconfiguration graph $\mathcal{R}_k(G)$ connected?

    \item The \emph{shortest path} problem: given a graph $G$, two dominating sets $D_\source$ and $D_\target$ and an integer $\ell$, is the distance in $\mathcal{R}_k(G)$ between $D_\source$ and $D_\target$ at most $\ell$? In other words, does there exist a reconfiguration sequence between $D_\source$ and $D_\target$ of length at most $\ell$?
\end{enumerate}

In this article, we focus on the reachability version and we will denote this problem by \DSR\ for short.
We adopt the same notation for \textsc{Vertex Cover Reconfiguration} and \textsc{Independent Set Reconfiguration} (the reachability question under the token sliding rule) and denote these two problems by \VCR\ and \ISR, respectively.

\paragraph{Related results.}
The reconfiguration of dominating sets has been mainly studied under the \emph{Token Addition and Removal} model. Haas and Seyffarth gave sufficient conditions to guarantee the connectivity of the reconfiguration graph according to $k$, the cardinality threshold of dominating sets \cite{Haas2014}. More precisely, they proved that $\mathcal{R}_{n-1}(G)$ (where $n$ is the number of vertices of $G$) is connected if $G$ has at least two independent edges. This value can be lowered to $\Gamma$+1 (where $\Gamma$ is the maximum size of a minimal dominating set) if the input graph $G$ is chordal or bipartite. Suzuki et al.~\cite{SMN14} showed that this result cannot be generalized to any graph since they constructed an infinite family of graphs for which $\mathcal{R}_{\Gamma+1}(G)$ is not connected. On the positive side, they proved that $\mathcal{R}_{n - \mu}(G)$ is connected if $G$ has a matching of size $\mu +1$.

Haddadan et al.~\cite{Haddadan:2016:CDS:3010167.3010682} studied the complexity of the reconfiguration of dominating sets under the token addition and removal rule from a graph classes perspective. They proved that the reachability problem is PSPACE-complete, even if the input graph is a split graph, a bipartite graph, has bounded bandwidth or is planar with maximum degree six. On the other hand, they gave linear-time algorithms for trees, interval graphs or cographs.

Mouawad et al.~\cite{Mouawad2017} studied the parameterized complexity of \textsc{Dominating Set Reconfiguration} under token addition and removal. They proved that this problem is W[2]-hard when parameterized by $k+\ell$, where $k$ is the threshold and $\ell$ the length of the reconfiguration sequence. As a positive result, Lokshtanov et al.~\cite{LOKSHTANOV2018122} gave a fixed-parameter algorithm with respect to $k$ for graphs excluding $K_{d,d}$ as a subgraph, for any constant $d$.

The third author also considered this problem (still in the {\sf TAR} model) through the lens of an \emph{optimization variant} (see Blanch\'e et al. \cite{Blanche20}) as recently introduced by Ito et al. for independent sets~\cite{ItoMNS19}.

To the best of our knowledge, the reconfiguration of dominating sets under {\sf TS} has only been studied from a structural perspective. Fricke et al.~\cite{FrickeHHH11} introduced the concept of $\gamma$-graph which corresponds to the reconfiguration graph $\mathcal{R}_{\gamma}(G)$ under the {\em token sliding} rule. In particular, they proved that $\mathcal{R}_{\gamma}(G)$ is connected and bipartite if $G$ is a tree. For a more complete overview, the reader is referred to~\cite{mynhardt2020reconfiguration}. 

\paragraph{Our contribution.}

\begin{figure}[bt]
    \centering
    \includegraphics[width=0.8\textwidth]{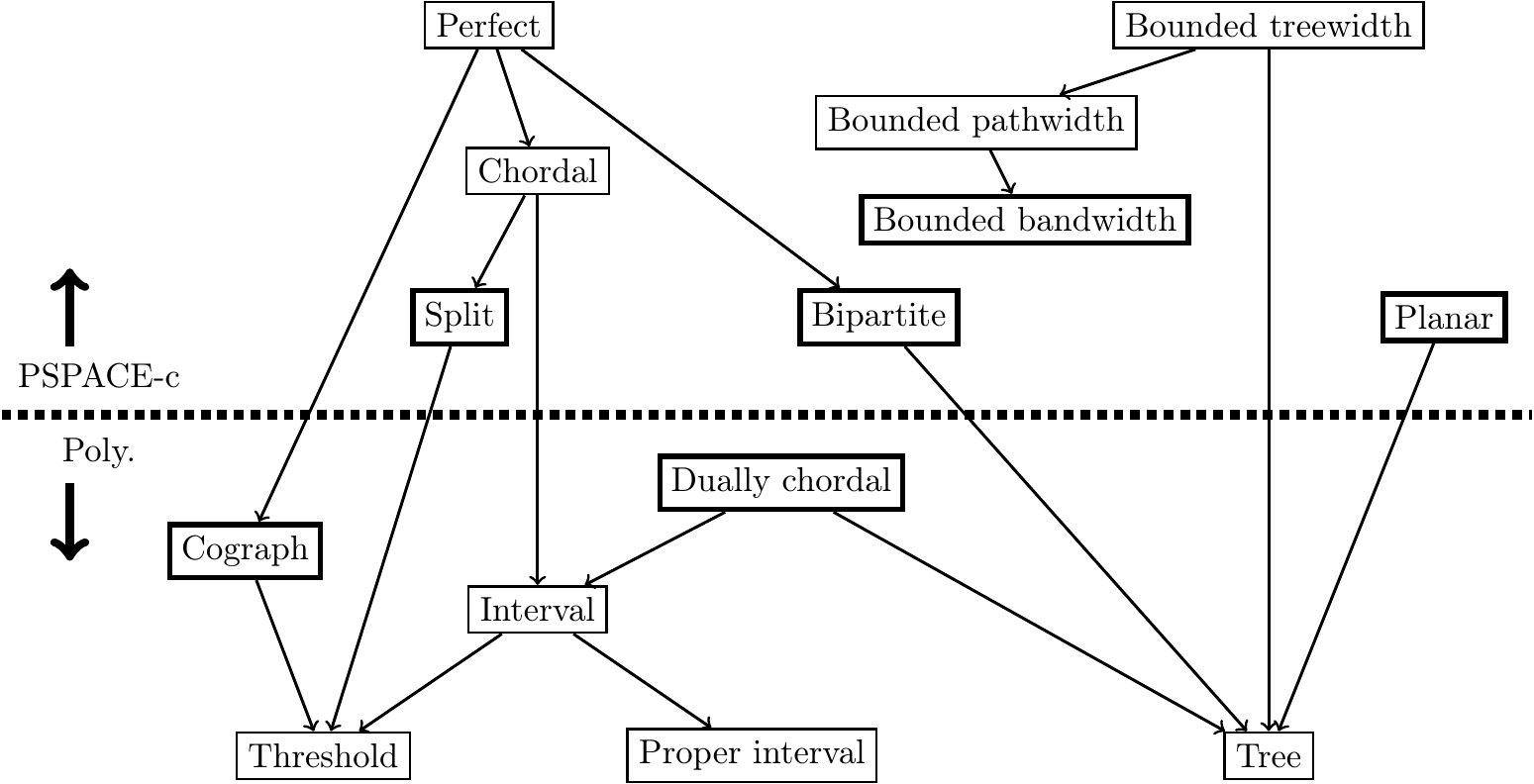}
    \caption{Our results: the frontier between PSPACE-completeness and tractability.}
    \label{fig:summary}
\end{figure}

In this article, we are interested in the reachability question of dominating sets reconfiguration under token sliding. This reconfiguration rule has already been studied for various reconfiguration problems but not for dominating sets, to the best of the authors' knowledge. 

We tackle this problem with a complexity perspective according to several graph classes: in Section \ref{sec:pspace}, we prove for instance that \DSR\ is PSPACE-complete for split graphs or bipartite graphs.
Note that the reductions used in the proofs of Theorems~\ref{thm:bipartite} and \ref{thm:planar-bandwidth} are identical to the ones of \cite{Haddadan:2016:CDS:3010167.3010682}, which are quite standard (see, e.g.,~\cite{BERTOSSI198437}). Our reduction to prove the PSPACE-completeness of \DSR\ for split graphs is similar but not identical to the one of~\cite{Haddadan:2016:CDS:3010167.3010682}, as we reduced from \DSRTJ\ and not from {\sc Vertex Cover Configuration}.

In Section \ref{sec:poly}, we show that this problem can be solved in polynomial time on other graph classes such as cographs or dually chordal graphs (the formal definitions of these two graph classes is given in Subsections~\ref{join-cographs} and \ref{dually}, respectively). Note that our result for cographs is a consequence of Theorem~\ref{thm:join} which is more general (see the discussion in Subsection~\ref{join-cographs}). Note also that our result on dually chordal graphs generalizes the ones of \cite{Haddadan:2016:CDS:3010167.3010682} for trees and interval graphs since the class of dually chordal graphs is a superclass of both interval graphs and trees as discussed in Subsection~\ref{dually}.

Figure \ref{fig:summary} gives an overview of our results where $A \rightarrow B$ means that the class $B$ is properly included in the class $A$.

\section{Preliminaries}

This section is devoted to some basic definitions of graph theory used in this article, followed by a more formal introduction of the problem we are interested in.

Each graph $G=(V,E)$ considered is simple (i.e., $G$ is undirected and has no multiple edges or loops) where $V$ represents the vertex set of $G$ and $E$ its edge set. We denote by $n = |V|$ and $m = |E|$ the number of vertices and edges of $G$. The \emph{eccentricity} of a vertex $u$ denoted by $\epsilon(u)$ is the maximum distance between $u$ and any other vertex. For a subset of vertices $S \subseteq V$, we denote by $G[S]$ the subgraph induced by $S$. For a vertex $u \in V$, we denote by $N_G(u)$ its open neighborhood, i.e., the set $\{v\ |\ uv \in E\}$ and by $N_G[u]$ its \emph{closed neighborhood}, i.e.\ the set $N_G(u) \cup \{u\}$. For a subset of vertices $S \subseteq V$, we define the closed neighborhood of $S$ as the union of the closed neighborhood of the vertices in $S$, i.e., $N_G[S] = \bigcup_{u \in S} N_G[u]$.

Let $G_1$ and $G_2$ be two graphs. We recall two basic binary operations on graphs: the disjoint union and the join operations. The disjoint union $G_1 \cup G_2$ of two graphs on disjoint vertex sets is the graph with vertex set $V(G_1 \cup G_2) = V(G_1) \cup V(G_2)$ and edge set $E(G_1 \cup G_2) = E(G_1) \cup E(G_2)$. The join operation can be obtained from the disjoint union by adding all possible edges between $G_1$ and $G_2$. More formally, the join of $G_1$ and $G_2$ denoted by $G_1 + G_2$ is the following graph:

\begin{itemize}
    \item $V(G_1 + G_2) = V(G_1) \cup V(G_2)$;
    \item $E(G_1 + G_2) =  E(G_1) \cup E(G_2) \cup \{uv\ |\ u \in V(G_1), v \in V(G_2)\}$.
\end{itemize}

A dominating set for a graph $G=(V,E)$ is a subset of vertices $D \subseteq V$ such that $N[D] = V$, i.e., each vertex either belongs to $D$ or has a neighbor in $D$. For a graph $G$, we denote by $\gamma(G)$ the domination number of $G$ defined as the minimum size of a dominating set. Let $G$ be a graph and $D$ a dominating set of $G$. We say that $u$ is a private neighbor of $v$ (with respect to $D$) if $u \not\in D$ and $v$ is the only neighbor of $u$ in $D$. Therefore, a dominating set is inclusion-wise minimal if and only if each of its vertices has a private neighbor.

\paragraph{Our problem.}
In the token sliding model, a natural question is whether we should authorize more than one token to be placed on a vertex during the reconfiguration sequence. Here is an example where it makes a difference: consider the star graph $S_n$ on $n+1$ vertices and two dominating sets $D_1$ and $D_2$ of $S_n$ of size $k$, with $k \in [2,n-1]$. Any dominating set of that size necessarily contains the central vertex. To reconfigure $D_1$ into $D_2$, we are forced to move a token from one leaf to another, which can only be done by going through the central vertex which already contains a token. Given such artificially negative examples, we choose to allow the superposition of tokens on a vertex. Note that this question did not arise in previous papers considering the token sliding model, to the best of our knowledge. Indeed, for problems like independent set, there can be no question of superposing two tokens, as two tokens cannot be adjacent in the first place. In the aforementioned paper considering token sliding for dominating sets, they exclusively consider that model in the case of minimum dominating sets: if superposition was an option, there would be a smaller dominating set, which is impossible.

Let $G$ be a graph, $D_\source$ and $D_\target$ be two dominating sets of $G$ of same size $k$. We say that $D_\source$ is reconfigurable into $D_\target$ by token sliding if there exists a sequence $S = \langle D_0 = D_\source, D_1, \ldots, D_{\ell-1}, D_\ell = D_\target \rangle$ that satisfies the two following properties:

\begin{itemize}
    \item each $D_i$ is a multiset of size $k$ that is a dominating set of $G$;
    \item there exists an edge $uv$ such that $D_{i+1} = D_i \setminus \{u\} \cup \{v\}$, i.e., we slide the token placed on the vertex $u$ along the edge $uv$. We denote this move by $u \to{{\sf TS}} v$.
\end{itemize}

We call such a sequence a {\sf TS}-sequence and we denote this property by $D_\source \to{{\sf TS}} D_\target$. We also say that $(G, D_\source, D_\target)$ is a \textsf{yes}-instance for the \DSR\ problem. 

We also introduce the two following notation, where $D_\source$ and $D_\target$ are two dominating sets of $G$ of size $k$.

\begin{itemize}
    \item $D_\source \reconf{{\sf TAR}} D_\target$: one can reconfigure $D_\source$ into $D_\target$ under the {\sf TAR} model; each intermediate solution is of size at most $k+1$;
    \item $D_\source \reconf{{\sf TJ}} D_\target$: one can reconfigure $D_\source$ into $D_\target$ under the {\sf TJ} model; each intermediate solution is of size exactly $k$.
\end{itemize}

A useful observation is that each reconfiguration sequence (and thus in particular a {\sf TS}-sequence) is reversible: if $D_\source \to{{\sf TS}} D_\target$ holds, then $D_\target \to{{\sf TS}} D_\source$ holds too. We thus denote this relation by $D_\source \reconf{{\sf TS}} D_\target$. Figure \ref{fig:ts-sequence} gives an example of a {\sf TS}-sequence.

\begin{figure}[bt]
    \centering
    \includegraphics[width=0.9\textwidth]{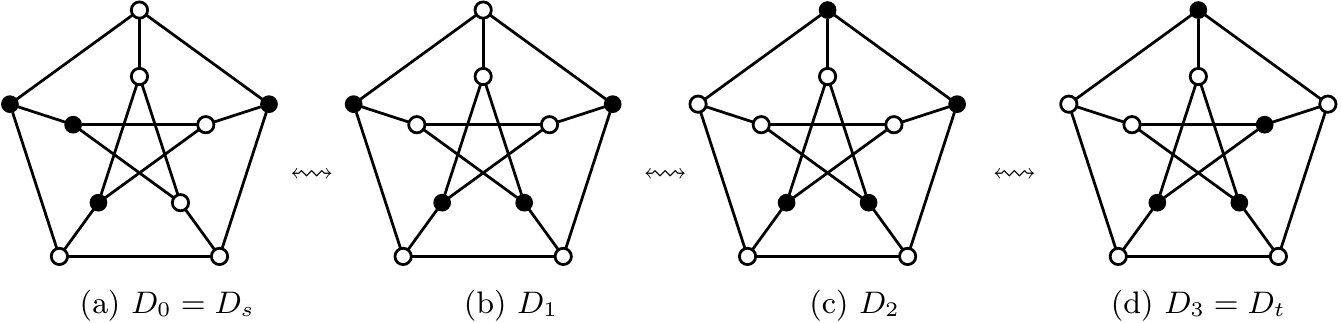}
    \caption{Example of {\sf TS}-sequence from $D_\source$ to $D_\target$.}
    \label{fig:ts-sequence}
\end{figure}

We are now ready to define properly the \textsc{Dominating Set Reconfiguration} problem under token sliding.

\medskip

\problem{\DSR }%
    {A graph $G=(V,E)$ and two dominating sets $D_\source$ and $D_\target$ of cardinality $k$ of $G$.}%
    {Is there a {\sf TS}-sequence between $D_\source$ and $D_\target$, i.e., does $D_\source \reconf{{\sf TS}} D_\target$ hold? }

\medskip    
    
We end this section by the following observation, showing that being reconfigurable is not a monotone property.

\begin{theorem}
For every $\ell \geq 3$, there exists a graph $G_\ell$ where, for every $k < \ell$, every dominating set of size $k$ can be reconfigured into any other, while there are two dominating sets of size $\ell$ that cannot be reconfigured one into the other.
\end{theorem}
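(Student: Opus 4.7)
The plan is to take $G_\ell := C_{3\ell}$, the cycle on vertex set $\{0, 1, \ldots, 3\ell-1\}$ with edges $\{i, i+1 \bmod 3\ell\}$. I would first establish that $\gamma(C_{3\ell}) = \ell$: each vertex covers exactly three vertices (itself and its two cycle-neighbours), so any dominating set has size at least $\lceil 3\ell/3\rceil = \ell$, and $\{0, 3, \ldots, 3\ell-3\}$ attains this bound. As a direct consequence, $C_{3\ell}$ has no dominating set of any size $k < \ell$, so the condition that every dominating set of size $k$ can be reconfigured into any other is \emph{vacuously} satisfied for all $k < \ell$.

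It then remains to exhibit two dominating sets of size exactly $\ell$ that cannot be reconfigured one into the other. I would take $D_\source = \{0, 3, 6, \ldots, 3\ell-3\}$ and $D_\target = \{1, 4, 7, \ldots, 3\ell-2\}$. A tight double-counting argument shows that every size-$\ell$ dominating set of $C_{3\ell}$ is one of the three equispaced shifts: the $\ell$ closed neighbourhoods of size $3$ can cover the $3\ell$ cycle vertices only if they form a partition, which forces equispacing. The main step is then to verify that each such set is \emph{frozen} under token sliding: starting from $D_\source$, sliding $0$ to $1$ leaves vertex $3\ell-1$ uncovered (its only in-set neighbour was $0$), and every other possible slide fails by the same rotational argument. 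Hence $D_\source$ has no TS-neighbour in $\mathcal{R}_\ell(G_\ell)$, and likewise for $D_\target$, so the two sets lie in distinct singleton components of $\mathcal{R}_\ell(G_\ell)$.

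The only content-ful step is this simultaneous uniqueness-and-rigidity observation: the counting inequality $3\ell \leq 3\ell$ is tight for any size-$\ell$ dominating set of $C_{3\ell}$, so each vertex is covered exactly once, and then perturbing any single token immediately removes the sole cover from some vertex. Beyond getting this partition-tightness argument right, I do not anticipate any real obstacle.
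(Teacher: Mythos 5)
Your cycle construction is internally sound: $\gamma(C_{3\ell})=\ell$, the tight covering count $\ell\cdot 3 = 3\ell$ forces every size-$\ell$ dominating set to be one of the three equispaced shifts, and each shift is frozen under token sliding (any slide strips the unique dominator from some vertex), so the shifts lie in distinct singleton components of $\mathcal{R}_\ell(C_{3\ell})$. The issue is how you handle the clause about sizes $k<\ell$: you satisfy it \emph{vacuously}, since $C_{3\ell}$ has no dominating set of any size below $\ell$. Under a strictly literal reading of ``every dominating set of size $k$ can be reconfigured into any other'' this is formally true, but the theorem is stated in the paper precisely to exhibit non-monotonicity of reconfigurability --- a graph in which the small dominating sets genuinely exist and are all mutually reachable, while some larger ones are not --- and a graph with no small dominating sets does not witness that phenomenon in any substantive way. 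I would treat this as a gap relative to the intended content of the statement, even if not to its letter.

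The paper's construction avoids the vacuity entirely. It glues $\ell$ pairs of triangles so that $\gamma(G_\ell)=2$, with two hub vertices $u$ and $v$ that every dominating set of size $k<\ell$ is forced to contain (omitting $u$, say, already requires $\ell$ extra vertices to dominate the triangles on $u$'s side). Hence for every $2\le k<\ell$ there are many dominating sets of size $k$, and they are all pairwise reconfigurable: once $u$ and $v$ are pinned down, the remaining $k-2$ tokens dominate nothing essential and can be slid freely (using the superposition of tokens that the model allows). Meanwhile the size-$\ell$ set $\{w_1,\dots,w_\ell\}$ of triangle-junction vertices is frozen, exactly as your shifts are. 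If you want to salvage your approach, you would need to replace $C_{3\ell}$ by a graph in which dominating sets of the smaller sizes actually exist and form a connected reconfiguration graph.
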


\begin{proof}
We first prove the statement for $k=2$.
For every integer $\ell > 2$, we define the graph $G_\ell$ such that $G_\ell$ contains exactly one dominating set of size $\gamma(G) = 2$ but for which the dominating sets of size $\ell$ are not reconfigurable. To construct $G_\ell$, we first create $\ell$ pairs of triangles $\{(G^i_1, G^i_2), \ldots, (G^\ell_1, G^\ell_2)\}$ such that $G^i_1$ and $G^i_2$ share exactly one vertex $w_i$. Moreover, let all the $G^i_1$'s share a vertex $u$ and all the $G^i_2$'s share a vertex $v$ (see Figure \ref{fig:G3} for $G_3$ as an example). Note that we have $\gamma(G_\ell) = 2$ since $\{u,v\}$ is a dominating set and $G_\ell$ does not contain a universal vertex (i.e., a vertex adjacent to all the other vertices). 

Consider the dominating set $D_\source = \{w_1, \ldots, w_\ell\}$. It is a dominating set of $G_\ell$ of size $\ell$. By token sliding, $D_\source$ cannot be reconfigured into any other dominating set of size $\ell$. Indeed, in $D_\source$ we cannot move any $w_i$ in a triangle because it would leave the other triangle of the pair $(G^i_1, G^i_2)$ not dominated. Note that any set of $\ell$ vertices containing $u$ and $v$ is a dominating set of $G_\ell$, hence the existence of two dominating sets of size $\ell$ as desired.

Consider now $k < \ell$. Any dominating set of $G_\ell$ on fewer than $\ell$ vertices contains both $u$ and $v$. Indeed, if for instance $u$ is not in the dominating set, then $\ell$ extra vertices are necessary to dominate the triangles $G^i_1$. Therefore, any dominating set $D$ of $G_\ell$ on $k$ vertices contains both $u$ and $v$. The other vertices are therefore not necessary for domination purposes, and we can slide them around as desired, superposing them with $u$ and $v$ arbitrarily. There are many dominating sets on $k$ vertices, but they all contain $u$ and $v$ and can be trivially reconfigured one into another. \qedhere
\end{proof}

\begin{figure}[bt]
    \centering
    \includegraphics[width=0.35\textwidth]{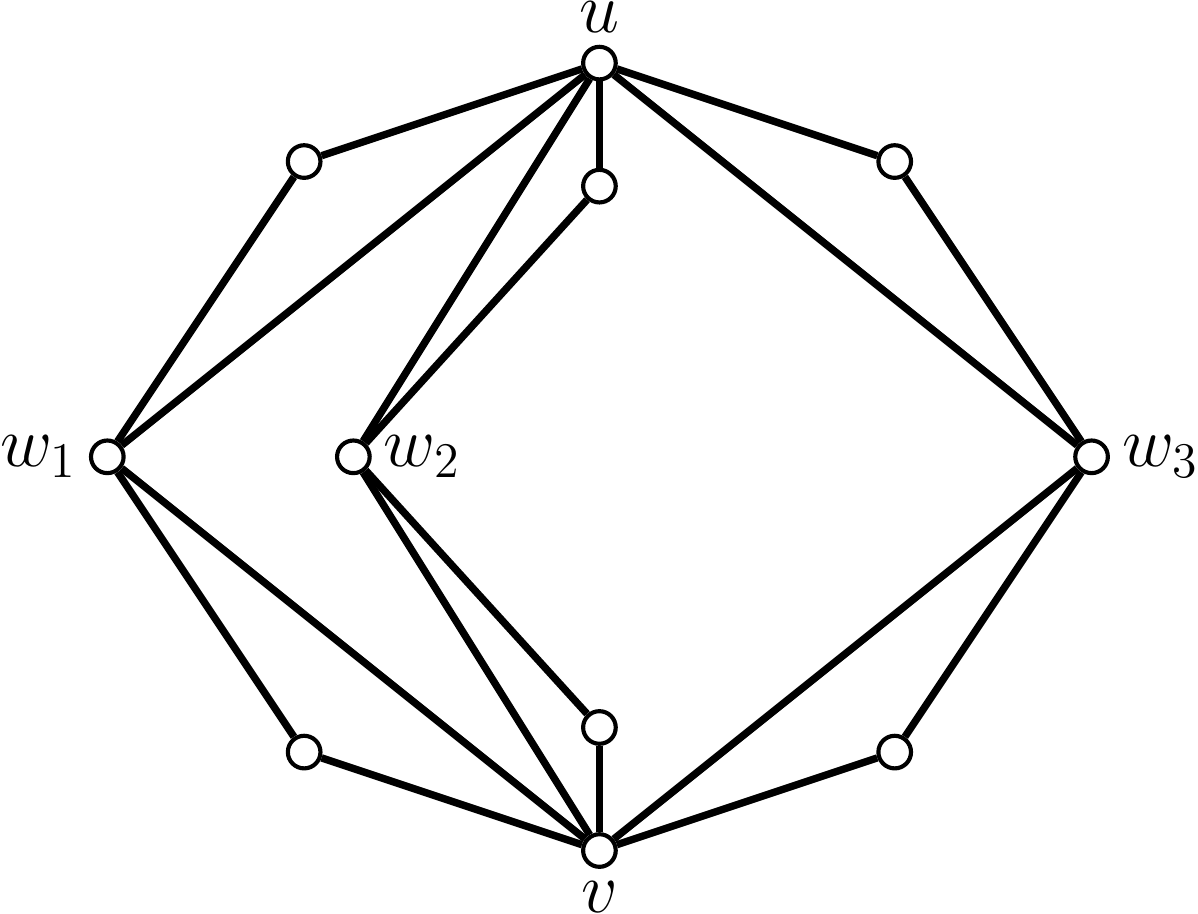}
    \caption{Graph $G_3$.}
    \label{fig:G3}
\end{figure}

\section{PSPACE-completeness} \label{sec:pspace}

In this section, we study the complexity of \DSR\ in the general case. We show that this problem is PSPACE-complete, even when restricted to split graphs, bipartite graphs or bounded treewidth graphs. Let us first recall the following result from Haddadan et al.~\cite{Haddadan:2016:CDS:3010167.3010682}, stating the complexity of the reconfiguration problem for the {\sf TAR} model.

\begin{theorem}[\cite{Haddadan:2016:CDS:3010167.3010682}] \label{thm:TAR-PSPACE}
Let $G$ be a graph and $D_\source, D_\target$ be two dominating sets of $G$ of size $k$. Deciding whether $D_\source \reconf{{\sf TAR}} D_\target$ is PSPACE-complete.
\end{theorem}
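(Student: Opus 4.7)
The plan has two parts: containment in PSPACE and PSPACE-hardness.

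For containment, every dominating set of $G$ has size at most $n$ and so uses only polynomial space to encode. A nondeterministic polynomial-space machine can verify reachability by guessing each successive configuration $D_i$, re-using space, and checking at each step that (i) $D_i$ dominates $G$, (ii) $|D_i| \leq k+1$, and (iii) $D_{i+1}$ differs from $D_i$ by exactly one added or removed vertex. Savitch's theorem ($\mathrm{NPSPACE} = \mathrm{PSPACE}$) then finishes this direction.

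For PSPACE-hardness, I would reduce from a problem already known to be PSPACE-complete in the reconfiguration world, e.g.\ the TAR-variant of \textsc{Vertex Cover Reconfiguration} or, more directly, \textsc{SAT Reconfiguration}. Starting from a CNF formula $\varphi$ together with two satisfying assignments $\alpha_\source, \alpha_\target$, one builds a graph $G_\varphi$ comprising:
\begin{itemize}
    \item a \emph{variable gadget} per variable $x_i$, consisting of two literal vertices $v_i, \bar v_i$ together with forced pendants, ensuring that every dominating set of size at most $k+1$ contains exactly one of $\{v_i, \bar v_i\}$;
    \item a \emph{clause gadget} per clause $C_j$, consisting of a vertex $c_j$ adjacent to the literal vertices appearing in $C_j$, padded with pendants so that $c_j$ is dominated in any dominating set of prescribed size if and only if at least one of its literals is selected.
\end{itemize}
With the parameter $k$ calibrated to the number of variables plus the total number of pendants, the dominating sets of $G_\varphi$ of size at most $k+1$ would be in bijection with (possibly partial) satisfying assignments of $\varphi$, and the TAR rule then simulates a single variable flip: remove a literal vertex and add its negation via an intermediate configuration of size $k+1$, legal only if $\varphi$ remains satisfied throughout.

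The main obstacle, as in all such reductions, is certifying the \emph{faithfulness} of the simulation. One must argue (a) that every TAR step on a dominating set of $G_\varphi$ projects to a valid assignment update, and (b) that no spurious intermediate configuration of size $k+1$ allows an illegitimate shortcut, such as parking the extra token on a clause gadget to temporarily defer a domination requirement or to flip two variables in one sweep. Ruling out such shortcuts is the delicate combinatorial step; it typically relies on pinning down the rigid structure of every dominating set via pendants and private-neighbor arguments, together with a size-counting argument showing that even in the one-token slack an intermediate dominating set still encodes a complete legal assignment.
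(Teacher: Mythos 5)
This statement is not proved in the paper at all: it is imported as a black box from Haddadan et al.~\cite{Haddadan:2016:CDS:3010167.3010682}, whose argument reduces from \textsc{Vertex Cover Reconfiguration} (itself PSPACE-complete via \textsc{Independent Set Reconfiguration}) using the classical vertex-cover-to-dominating-set construction --- a degree-two vertex $z_{uv}$ per edge $uv$ plus a forced apex --- i.e., essentially the construction the present paper reuses in Theorem~\ref{thm:bipartite}. Your membership argument (guess the sequence nondeterministically, reuse space, apply Savitch) is fine and standard. The hardness half, however, is a plan rather than a proof, and I see genuine gaps in it beyond the faithfulness issue you yourself flag. First, the variable gadget cannot do what you ask of it: under {\sf TAR} a variable flip must pass through an intermediate dominating set of size $k+1$ containing \emph{both} $v_i$ and $\bar v_i$, so you cannot enforce that every admissible configuration contains \emph{exactly} one literal per variable; at best you get ``at least one,'' and you must then separately exclude configurations that hold both literals of one variable while another variable or clause is serviced illegitimately by the slack token. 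Second, the clause gadget as described is internally inconsistent: if you pad $c_j$ with pendants to control the budget, then every dominating set must place a token on $c_j$ or on one of its pendant neighbors, which dominates $c_j$ \emph{regardless} of which literals are selected, destroying the intended equivalence between ``$c_j$ is dominated'' and ``$C_j$ is satisfied.'' Third, the step you correctly identify as the crux --- ruling out spurious size-$(k+1)$ intermediates that park the extra token on a gadget to defer a domination requirement --- is exactly the part left undone, so no complete reduction is on the table.

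If you want a self-contained proof, the efficient route is the one you mention first and then abandon: take \textsc{Vertex Cover Reconfiguration} under {\sf TAR}/{\sf TJ} as the source problem and apply the standard reduction adding a private edge-vertex $z_{uv}$ for each edge together with a universal-side apex with a pendant (to pin a token there). Then vertex covers of $G$ of size $k$ correspond exactly to dominating sets of $G'$ of size $k+1$, a single token addition or removal translates directly, and the ``no shortcut'' analysis reduces to the observation that the only vertices whose domination is nontrivial are the $z_{uv}$, each of which certifies coverage of its edge. This sidesteps all gadget design and matches the argument behind the cited theorem.
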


Note that the problem remains PSPACE-complete, even if the input graph is a planar graph with maximum degree 6, has bounded bandwidth, is bipartite or is a  split graph as pointed out previously.
Let us now show that the {\sf TJ} and {\sf TAR} rules are equivalent under some constraints. Note that a similar proof can be found in~\cite{Haas2014}.

\begin{lemma} \label{lemma:equivalenceTAR-TJ}
Let $G$ be a graph and $D_\source$ and $D_\target$ be two dominating sets of $G$ of size $k$. We have $D_\source \reconf{{\sf TAR}} D_\target$ if and only if $D_\source \reconf{{\sf TJ}} D_\target$.
\end{lemma}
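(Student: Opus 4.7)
The equivalence splits into two directions. The easy direction ($D_\source \reconf{{\sf TJ}} D_\target \Rightarrow D_\source \reconf{{\sf TAR}} D_\target$) is by direct simulation: each TJ move jumping a token from $v$ to $u$ becomes the two-step TAR fragment ``add $u$, then remove $v$''. The intermediate $D \cup \{u\}$ is a superset of the dominating set $D$, hence itself dominating, and has size $k+1$, meeting the threshold; the resulting $(D \setminus \{v\}) \cup \{u\}$ is dominating since it is the output of the (assumed valid) TJ move. Chaining these simulations yields the desired TAR sequence.

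For the reverse direction, my plan is to normalize a given TAR sequence $\langle S_0, \ldots, S_\ell \rangle$ into one whose intermediate sets all have size in $\{k, k+1\}$, from which a TJ sequence can be read off. Indeed, in such a normal form the moves must alternate between ``add'' (from size $k$ to size $k+1$) and ``remove'' (from size $k+1$ back to $k$), and each consecutive pair $S_{2i} \to S_{2i+1} \to S_{2i+2}$ encodes a single TJ move from $S_{2i}$ to $S_{2i+2}$, with validity guaranteed because $S_{2i+2}$ is a dominating set by hypothesis.

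The normalization proceeds by iterating the following local rewriting: whenever two consecutive moves are ``remove $v$'' then ``add $u$'' with $v \neq u$ and the set between them has size strictly less than $k$, replace them by ``add $u$'' then ``remove $v$''. The endpoints of the pair are preserved, the new intermediate $S \cup \{u\}$ is dominating as a superset of $S$, and its size $|S|+1 \leq k+1$ still respects the threshold (the case $|S| \leq k$ being exactly when the swap is performed). Trivial (remove $v$, add $v$) pairs are simply deleted. Termination follows from the potential $\Phi = \sum_i \max(0, k - |S_i|)$, a non-negative integer that strictly decreases with each swap, since the affected intermediate's size rises from $|S|-1$ to $|S|+1$.

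The main obstacle is to verify that at termination every intermediate truly satisfies $|S_i| \geq k$, not merely that every adjacent (remove, add) pair does. I plan to argue by contradiction: if some $|S_m| < k$ survives termination, then move $m$ must be a ``remove''; by termination no (remove, add) adjacency with sub-$k$ intermediate exists, so move $m+1$ is again a ``remove'', forcing $|S_{m+1}| < |S_m| < k$; iterating, the sizes decrease monotonically up to the first subsequent ``add'' (which must exist since $|S_\ell| = k$), but the (remove, add) adjacency found there has intermediate size $<k$, contradicting termination. Thus all intermediate sizes at termination lie in $\{k, k+1\}$, and extracting every second set yields the desired TJ sequence from $D_\source$ to $D_\target$.
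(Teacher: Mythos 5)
Your proof is correct and follows essentially the same route as the paper's: simulate each TJ move by ``add then remove'' for one direction, and for the converse repeatedly swap a (remove, add) pair around a sub-$k$ intermediate into (add, remove) --- cancelling trivial pairs --- until the sequence alternates at sizes $k$ and $k+1$ and can be read off as a TJ sequence. Your explicit potential-function termination argument and the final check that no sub-$k$ intermediate survives are welcome refinements of details the paper leaves implicit, but they do not constitute a different approach.
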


\begin{proof}
The proof is an adaptation of the Theorem 1 of Kami\'nski et al. \cite{KAMINSKI20129}. 

($\Leftarrow$)
Suppose first $D_\source \reconf{{\sf TJ}} D_\target$, and let $S$ be a {\sf TJ}-sequence that reconfigures $D_\source$ into $D_\target$. This sequence corresponds to a sequence of moves $u \to{{\sf TJ}} v$. We construct a {\sf TAR}-sequence by replacing each atomic move $u \to{{\sf TJ}} v$ with two moves of the {\sf TAR} model: we first add $v$ and then delete $u$. By first adding $v$, we preserve the domination property. Besides, since we immediately delete $u$ after the addition of $v$, each intermediate solution is of size at most $k+1$, as desired.

($\Rightarrow$)
For the other direction, let $S'$ be a {\sf TAR}-sequence that reconfigures $D_\source$ into $D_\target$. Note that since $|D_\source| = |D_\target| = k$, $S'$ is of even length. Moreover, by hypothesis, $S'$ does not contain a configuration of size more than $k+1$. If all the configurations of $S'$ are of size $k$ or $k+1$, this means that $S'$ corresponds to an alternation of an addition of a token on some vertex $v$ immediately followed by the deletion of a token on a vertex $u$. Therefore, to get a {\sf TJ}-sequence, we simply replace each of these subsequences by a move $u \to{{\sf TJ}} v$. Suppose now that $S'$ contains some configuration of size less than $k$ and consider a configuration, let us say $D_i$, of smallest size. Since $D_i$ is a configuration of smallest size, this means that it has been obtained from $D_{i-1}$ by the deletion of some vertex $x$. We also know that the configuration $D_{i+1}$ is obtained from $D_i$ by the addition of some vertex $y$. If $x = y$, then these two steps are redundant and can simply be ignored. Otherwise, observe that if we first add $y$ and then delete $x$, the new sequence is still valid. If all the configurations are of size $k$ or $k+1$, we immediately obtain a {\sf TJ}-sequence. Otherwise, we can repeat this process until this is the case.
\end{proof}

As a corollary of Theorem \ref{thm:TAR-PSPACE} and Lemma \ref{lemma:equivalenceTAR-TJ}, we obtain that deciding whether two dominating sets of size $k$ of a graph $G$ can be reconfigured under the token jumping model is a PSPACE-complete problem. We are now ready to prove Theorem \ref{thm:split}.

\begin{theorem} \label{thm:split}
\DSR\ is PSPACE-complete on split graphs.
\end{theorem}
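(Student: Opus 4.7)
The plan is to reduce from \DSRTJ, which is PSPACE-complete by combining Theorem~\ref{thm:TAR-PSPACE} with Lemma~\ref{lemma:equivalenceTAR-TJ}; membership of \DSR\ in PSPACE is standard. Given an instance $(G, D_\source, D_\target)$ of \DSRTJ\ with $|D_\source|=|D_\target|=k$, I would build a split graph $G'$ whose clique part is $K=\{u_v:v\in V(G)\}$ and whose independent part is $I=\{x_v:v\in V(G)\}$, adding the edge $u_w x_v$ whenever $w\in N_G[v]$. I then take $D_\source'=\{u_v:v\in D_\source\}$ and $D_\target'=\{u_v:v\in D_\target\}$.

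The first step is to verify the correspondence: a set $D\subseteq K$ dominates $G'$ if and only if, identifying $u_v$ with $v$, $D$ dominates $G$. Indeed $K$ is dominated by any nonempty $D\subseteq K$ since $K$ is a clique, while $x_v$ is dominated by $D$ exactly when some $u_w\in D$ satisfies $w\in N_G[v]$. In particular $D_\source'$ and $D_\target'$ are dominating sets of $G'$ of size $k$. The forward direction then follows immediately: every \textsf{TJ}-move $v\to{{\sf TJ}}w$ in $G$ is simulated in $G'$ by the single \textsf{TS}-move $u_v\to{{\sf TS}}u_w$ along a clique edge, and each intermediate configuration remains in $K$ and hence dominates $G'$ by the correspondence.

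The backward direction is the main obstacle, since a \textsf{TS}-sequence in $G'$ is free to park tokens on vertices of $I$, something a \textsf{TJ}-sequence in $G$ cannot mimic verbatim. The key observation I would prove is that $N_{G'}[u_w]\supseteq N_{G'}[x_v]$ whenever $w\in N_G[v]$: $u_w$ dominates all of $K$ by clique adjacency and dominates $x_v$ through the edge $u_w x_v$, while $x_v$ has no further neighbors since $I$ is independent. Consequently, replacing any token on $x_v$ by a token on any $u_w$ with $w\in N_G[v]$ preserves the dominating property of any configuration in $G'$. Using this, any \textsf{TS}-sequence from $D_\source'$ to $D_\target'$ can be reshaped into one that never visits $I$: whenever the original sequence performs a slide $u_a\to{{\sf TS}}x_v$, parks the token on $x_v$ over some further moves, and eventually brings it back via $x_v\to{{\sf TS}}u_b$ with $b\in N_G[v]$, I replace the first move by the clique slide $u_a\to{{\sf TS}}u_b$, replace the last move by a no-op, and swap $x_v$ for $u_b$ in every intermediate configuration; the key observation guarantees each of these remains a dominating set of $G'$. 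The reshaped sequence stays entirely inside $K$ and projects, via $u_v\mapsto v$, to a valid \textsf{TJ}-sequence in $G$ from $D_\source$ to $D_\target$, finishing the reduction.
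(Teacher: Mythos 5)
Your proposal is correct and follows essentially the same route as the paper: the same reduction from \DSRTJ\ to the split graph with clique part $V(G)$ and independent part a copy of $V(G)$ joined via closed neighborhoods, the same simulation of \textsf{TJ}-moves by clique slides, and the same elimination of detours through the independent part using the containment $N_{G'}[x_v]\subseteq N_{G'}[u_b]$ for $b\in N_G[v]$. No substantive differences to report.
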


\begin{proof}
First, note that the problem is in PSPACE \cite{ITO20111054}. We give a polynomial-time reduction from \DSRTJ, which is PSPACE-complete as discussed above. Let $G=(V,E)$ be a graph with $V(G) = \{v_1, \ldots, v_n\}$. We construct the corresponding split graph $G'$ as follows:

\begin{itemize}
    \item $V(G') = V_1 \cup V_2$ where $V_1 = \{v_1, \ldots, v_n\}$ and $V_2 = \{w_1, \ldots, w_n\}$;
    \item $E(G') = \{uv\ |\ u,v \in V_1\} ~ \cup ~ \{v_iw_j\ |\  v_j \in N_G[v_i]\}$, i.e., we add all possible edges in $V_1$ so that $V_1$ forms a clique. We also add an edge between a vertex $v_i \in V_1$ and a vertex $w_j \in V_2$ if and only if the corresponding vertex $v_j$ in the original graph $G$ belongs to the closed neighborhood of $v_i$ in $G$.
\end{itemize}

\begin{figure}[bt]
    \centering
    \includegraphics[width=0.4\textwidth]{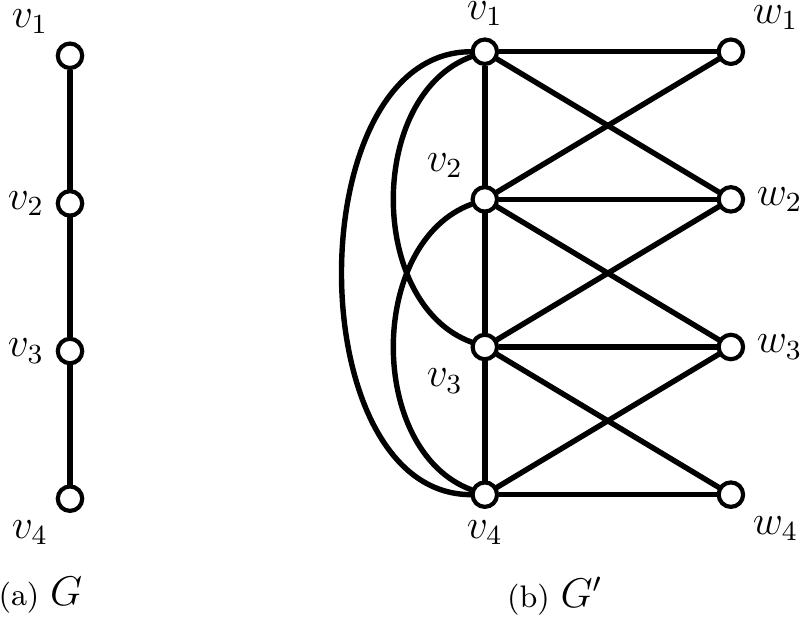}
    \caption{Example for the reduction of Theorem~\ref{thm:split}.}
    \label{fig:reduc-split}
\end{figure}

Observe that $G'$ is a split graph since $V_1$ forms a clique and $V_2$ an independent set (see Figure \ref{fig:reduc-split} for an example).
To a set of vertices of $G$, we associate the corresponding vertices of $V_1$ in $G'$. By definition of $G'$, any dominating set $D$ of $G$ is also a dominating set for $G'$: indeed, a vertex $v_i \in V_1$ dominates all the vertices in $V_1$ (since it is a clique) and all the vertices in $V_2$ that correspond to vertices in its closed neighborhood in $G$. That $D$ dominates $G$ allows us to conclude that the corresponding set also dominates $V_2$. Hence, $D$ is also a dominating set of $G'$.

Let ($G$, $D_\source, D_\target$) be an instance of \DSRTJ; we reduce this instance to the instance of \DSR\ $(G',D_\source, D_\target)$. This reduction can be done in quadratic time.
Now, we need to prove that $D_\source \reconf{{\sf TS}} D_\target$ if and only if there is a reconfiguration sequence between $D_\source$ and $D_\target$ in $G'$ using the token jumping model. 

($\Leftarrow$)
Consider a {\sf TJ}-sequence in $G$, and translate it to $G'$. All intermediate sets still are dominating sets, and since all pairs of vertices are joined by an edge in $V_1$, this sequence is a valid {\sf TS}-sequence in $G'$.

($\Rightarrow$)
We now prove the other direction.
Let $\langle D_0 = D_\source, \ldots, D_p = D_\target \rangle$ be a {\sf TS}-sequence in $G'$. First, observe that any dominating $D'$ of $G'$ such that $D' \subseteq V_1$ corresponds to a dominating set of $G$. Indeed, any vertex $w_j \in V_2$ is dominated by a vertex $v_i \in V_1$ and by construction of $G'$, $v_iv_j \in E(G)$. Hence, $v_j$ is dominated  by $v_i$ and thus $D'$ is also a dominating set of $G$. Hence, if the sequence does not use vertices in $V_2$, we immediately obtain a {\sf TJ}-sequence in $G$ from $D_\source$ to $D_\target$, as the token jumping model does not require adjacency.
Suppose on the other hand that the sequence goes through some vertices in $V_2$. Since all vertices are initially in $V_1$, there is a subsequence that contains a move $v_i \to{{\sf TS}} w_j$.
Since $w_j \notin V_1$, there exists a later step where the token on $w_j$ is moved to an adjacent vertex $v_k$ in $V_1$ (since $V_2$ is independent). However, $w_j$ does not dominate any vertex in $V_2$ (since $V_2$ is a stable set) and thus $N[w_j] \subseteq N[v_k]$. Therefore, we simply replace these two moves by a single move $v_i \to{{\sf TS}} v_k$. We can thus assume that the reconfiguration sequence only uses vertices in $V_1$. The conclusion follows. \qedhere
\end{proof}

Next, we prove that \DSR\ is PSPACE-complete on bipartite graphs.
We use a reduction from  the \textsc{Vertex Cover Reconfiguration} problem under token sliding (or \VCR\ for short). Recall that a vertex cover is a set of vertices such that every edge has an endpoint in the set. The complement of a vertex cover is an independent set whose reconfiguration is known to be PSPACE-complete on planar graphs of maximum degree 3 \cite{Hearn:2005:PSP:1140710.1140715, BONSMA20095215} or on bounded bandwidth graphs \cite{WROCHNA20181}. Hence, \VCR\ is PSPACE-complete, even when restricted to these two classes.

\begin{theorem} \label{thm:bipartite}
\DSR\ is PSPACE-complete on bipartite graphs.
\end{theorem}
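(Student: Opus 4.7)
The plan is to give a polynomial-time reduction from \VCR\ on planar graphs of maximum degree $3$, which is PSPACE-complete as recalled just before the theorem statement. Given an instance $(G, S_\source, S_\target)$ of \VCR\ with $|S_\source|=|S_\target|=k$, I construct the graph $G'$ as follows: subdivide each edge $e=uv \in E(G)$ by a new vertex $w_e$ (adjacent to $u$ and $v$ only, while the original edge $uv$ is removed), add a new ``gadget vertex'' $x$ adjacent to every $v \in V(G)$, and finally attach $k+2$ pendants $y_1,\ldots,y_{k+2}$ to $x$. The resulting graph is bipartite, with parts $V(G) \cup \{y_1,\ldots,y_{k+2}\}$ and $\{w_e : e \in E(G)\} \cup \{x\}$, and is built in polynomial time. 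I set $D_\source := S_\source \cup \{x\}$ and $D_\target := S_\target \cup \{x\}$, each of size $k+1$.

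For the ($\Rightarrow$) direction, I would simulate each move $u \to v$ along an edge $uv \in E(G)$ of a \VCR\ \textsf{TS}-sequence by the two \textsf{TS}-moves $u \to w_{uv}$ and $w_{uv} \to v$ in $G'$. The intermediate set $\{x\} \cup (S_i \setminus \{u\}) \cup \{w_{uv}\}$ dominates $G'$: the token on $x$ covers $V(G) \cup \{y_1,\ldots,y_{k+2}\}$, the token on $w_{uv}$ covers itself, and every other subdivision vertex $w_{ab}$ (for edge $ab \neq uv$) is covered because the edge $ab$ has an endpoint in $S_i \setminus \{u\}$; this last fact follows from a short case analysis using that both $S_i$ and $S_{i+1}=(S_i\setminus\{u\})\cup\{v\}$ are vertex covers of $G$.

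For the ($\Leftarrow$) direction, I would first observe that the vertex $x$ lies in every $D_i$ along any \textsf{TS}-sequence in $G'$: otherwise each of the $k+2$ pendants $y_j$ would need its own token (since each $y_j$ has $x$ as its unique neighbor), exceeding the budget $k+1$. I would then ``normalize'' the sequence by excising detours: any token that slides from $x$ onto some $y_j$ must eventually slide back to $x$ (as $y_j$ is a leaf), and this round trip can be removed; similarly, any token that slides from some $v \in V(G)$ onto $w_{uv}$ must eventually slide to some $v' \in \{u,v\}$, and such a two-step slide $v \to w_{uv} \to v'$ simulates exactly the single \VCR\ edge-slide $v \to v'$ along $vv' \in E(G)$ (with $v'=v$ giving a deletable no-op). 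Projecting onto $V(G)$ then yields a \VCR\ \textsf{TS}-sequence from $S_\source$ to $S_\target$: at any step where no token sits on $W$, the set $D_i \cap V(G)$ must be a vertex cover of $G$, since each subdivision vertex $w_e$ needs to be dominated and its only neighbors in $V(G')$ are the endpoints of $e$.

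The main obstacle is making this normalization rigorous in the presence of token superposition, since tokens may accumulate on $x$ during the sequence and then scatter back out to $y_j$'s or to $V(G)$ in complex patterns. The key technical step is showing that each excursion to a pendant or to a subdivision vertex can be paired with its exit and either deleted (pendant and $w_e$-round trips) or collapsed into a single \VCR\ move ($w_e$-transits), all while preserving the dominating-set property at every intermediate step; once this is done, the projected sequence on $V(G)$ is the desired \VCR\ \textsf{TS}-sequence, and combined with the ($\Rightarrow$) direction and membership in PSPACE (Theorem~\ref{thm:TAR-PSPACE} combined with standard arguments, cf.~\cite{ITO20111054}), this establishes PSPACE-completeness.
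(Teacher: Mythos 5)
Your construction and overall strategy are essentially the paper's: subdivide every edge of $G$, add an apex vertex $x$ adjacent to all of $V(G)$ together with pendant leaves that force a token to remain on $x$, and map a vertex cover $S$ to the dominating set $S \cup \{x\}$. Your use of $k+2$ pendants to pin a token on $x$ outright is a small, clean variant of the paper's single pendant $y$ (the paper instead argues that moves $x \to{} y$ can be ignored). The forward direction and the treatment of excursions onto pendants and onto subdivision vertices $w_e$ match the paper's argument.

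There is, however, a genuine gap in your backward direction: you never handle the move $v \to{} x$ for $v \in V(G)$. Since superposition of tokens is allowed and $x$ is adjacent to every vertex of $V(G)$, a token may legally slide from $v$ onto $x$ (joining the token already parked there) and later slide from $x$ to an arbitrary $v' \in V(G)$. This is neither an excursion to a pendant nor to a subdivision vertex, so none of your normalization steps apply to it, and the projection onto $V(G)$ of such a pair of moves is a token \emph{jump} from $v$ to $v'$, not a single \VCR\ slide. You explicitly flag this phenomenon (``tokens may accumulate on $x$ and then scatter back out'') as the main obstacle but leave it unresolved; it is in fact the case on which the paper spends the most effort. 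The fix is to observe that, after the other normalizations, no subdivision vertex carries a token when $v \to{} x$ is applied, so every $w_e$ must still be dominated by an endpoint of $e$ and hence the tokens remaining on $V(G)$ already form a vertex cover of $G$; the token moved onto $x$ is therefore redundant, and the pair $v \to{} x$, $x \to{} v'$ can be replaced by sliding that redundant token along any $v$--$v'$ path in $G$ (assuming, as one may, that the \VCR\ instance is connected), every intermediate set being a superset of a vertex cover. Without this step the backward direction of the reduction does not go through.
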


\begin{figure}[b]
    \centering
    \includegraphics[width=0.65\textwidth]{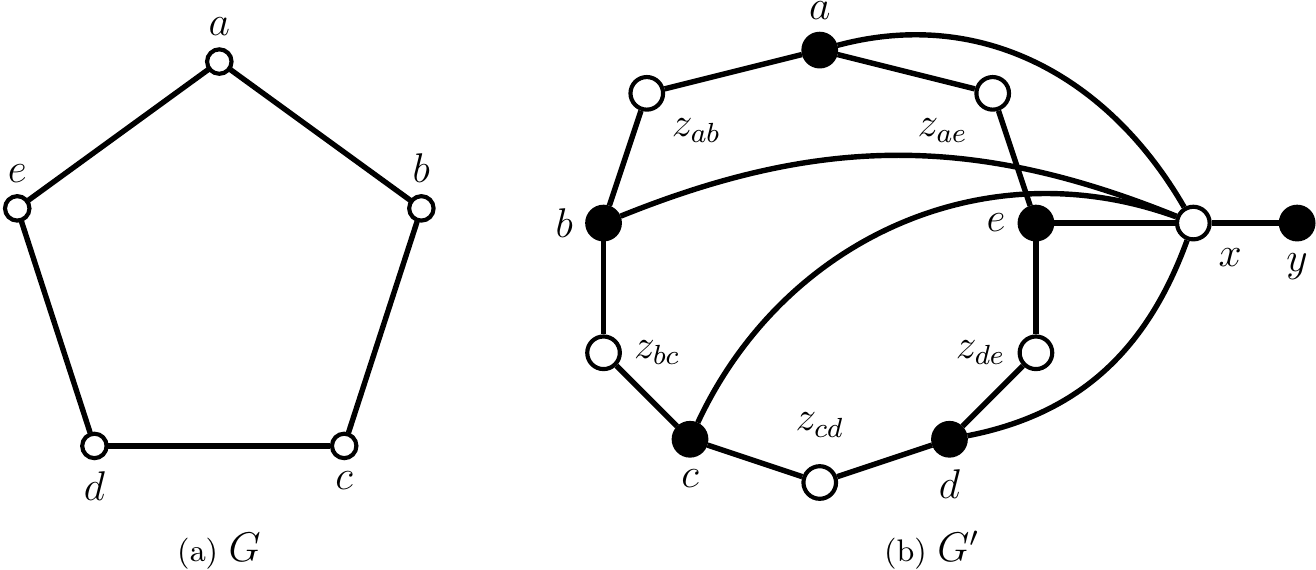}
    \caption{Example for the reduction of Theorem \ref{thm:bipartite}.}
    \label{fig:reduc-bipartite}
\end{figure}

\begin{proof}
We give a polynomial-time reduction from \VCR. This is an adaptation of the well-known reduction from \textsc{Vertex Cover} to \textsc{Dominating Set} \cite{Garey:1990:CIG:574848}. Let $G=(V,E)$ be a graph. We construct the corresponding bipartite graph $G'=(V_1 \uplus V_2, E')$ as follows: for each edge $uv \in E$, add $u$ and $v$ to $V_1$ and a new vertex $z_{uv}$ of degree two to $V_2$ that is adjacent to exactly $u$ and $v$. Note that $E'$ does not contain the edge $uv$ so that $V_1$ induces an independent set. Finally, add to $V_2$ a vertex $x$ adjacent to all the vertices in $V_1$ and attach to $x$ a degree-one vertex $y$ which is added to $V_1$ (see Figure \ref{fig:reduc-bipartite} for an example). Formally, the graph $G'$ is the following:

\begin{itemize}
    \item $V(G') = V_1 \cup V_2$ where $V_1 = V(G) \cup \{y\}$ and $V_2 = \{z_{uv}\ |\ uv \in E\} \cup \{x\}$;
    \item $E' = \{uz_{uv} \textrm{ and } z_{uv}v\ |\ u,v \in V_1 \textrm{ and } z_{uv} \in V_2\} \cup \{ xv\ |\ v \in V_1\} \cup \{xy\}$.
\end{itemize}

Observe that $G'$ is bipartite and the reduction can be done in polynomial time. We now prove that the vertex covers of $G$ of size $k$ are reconfigurable if and only if the dominating sets of $G'$ of size $k+1$ are. Let $(G,C_s,C_\target)$ be an instance for the \VCR\ problem.
We define the corresponding instance for the \DSR\ problem as $(G',C_s \cup \{x\},C_\target \cup \{x\})$. Since $C_s$ is a vertex cover of $G$, for every edge $uv \in E(G)$ we have $\{u,v\} \cap C_s \neq \emptyset$ and thus the vertices $u,v,z_{uv}$ are dominated by $C_s$ in $G'$. Now $x$ dominates both $x$ and $y$, so  $D_\source = C_s \cup \{x\}$ is a dominating set of $G'$, and by the same argument, so is $D_\target = C_\target \cup \{x\}$. Since \VCR\ and \DSR\ both employ the same reconfiguration rule, we simply denote by $u \to{} v$ (instead of $u \to{\sf TS} v$) a move of a reconfiguration sequence between $C_\source$ and $C_\target$ (respectively $D_\source$ and $D_\target$).

($\Rightarrow$)
We start with the only if direction. First, it immediately follows from the definition of $D_\source$ and $D_\target$ that $D_\source \setminus \{x\} = C_s$ and $D_\target \setminus \{x\} = C_\target$. Let us assume that $(G,C_s,C_\target)$ is a \textsf{yes}-instance for the \VCR\ problem. Then, there exists a reconfiguration sequence $S$ using the token sliding model between $C_s$ and $C_\target$. One can construct a sequence $S'$ for $G'$ by replacing a move $u \to{} v$ (where $uv \in E(G))$ of $S$ into two moves: $u \to{} z_{uv}$ followed by $z_{uv} \to{} v$. We need to prove that the domination property is preserved at every step. First, observe that each intermediate solution contains $x$, so each move of the form $z_{uv} \to{} v$ is safe because $u$ is still dominated by $x$ and $z_{uv}$ by $v$. Therefore, the only risk is to leave some vertex $z_{wu}$ non dominated after a move $u \to{} z_{uv}$. In that case, this implies that $w$ does not belong to the solution, which in turn means that the edges $wu$ and $uv$ are covered only by $u$. Therefore, the move $u \to{} v$ of the sequence $S$ is not valid (because the edge $wu$ is no longer covered), a contradiction. Therefore, $(G', D_\source,D_\target)$ is a \textsf{yes}-instance for the \DSR\ problem.

($\Leftarrow$)
It remains to prove the if direction. Suppose that $(G',D_\source,D_\target)$ is a \textsf{yes}-instance for the \DSR\ problem. Then, there exists a reconfiguration sequence $S' = \langle D_\source, \ldots, D_\target \rangle$ in $G'$. First, observe that at each step, $y$ needs to be dominated and thus either $x$ or $y$ belongs to each solution. Moreover, initially, $D_\source$ does not contain $y$. We can ignore all moves of the form $x \to{} y$ (each such move will be eventually followed by a move $y \to{} x$), and assume that $x$ contains at least one token in each solution. Therefore, the only vertices whose domination is not immediate by the existence of a token on $x$ are the vertices of the form $z_{uv}$, i.e., the vertices that correspond to the edges of $G$.
Recall that $D_\source = C_\source \cup \{x\}$, so every vertex in $D_\source \setminus \{x\}$ belongs to $V(G') \cap V(G)$. We consider in turn the two other possible moves $u \to{} v$, where $u \in V(G) \cap V(G')$ (i.e., $u$ corresponds to a vertex of the original graph $G$), and $v$ either belongs to $V(G') \setminus V(G)$ or $v=x$. We focus on the next operation (which may not be consecutive) that touches the vertex $v$.
Suppose first that $v \in V(G') \setminus V(G)$, i.e., $v$ corresponds to a vertex $z_{uu'}$ for some vertex $u' \in N_G[u]$. If the next move that touches $z_{uu'}$ is $z_{uu'} \to{} u$, these two operations can be ignored. Otherwise, since $z_{uu'}$ has degree two, the next operation that touches $z_{uu'}$ is $z_{uu'} \to{} u'$. Moreover, we claim that we can assume that $z_{uu'} \to{} v$ is the operation that immediately follows the move $u \to{} z_{uu'}$. Indeed, $N_{G'}[z_{uu'}] \subseteq N_{G'}[u]$ so if a dominating set $D$ contains $z_{uu'}$, $D' = (D \setminus \{z_{uu'}\}) \cup \{u'\}$ is also a dominating set of $G'$. So one can assume that in $S'$, if we have a move $u \to{} z_{uu'}$, it will be immediately followed by a move $z_{uu'} \to{} u'$. In that case, one can replace these two moves by $u \to{} u'$ in a reconfiguration sequence from $C_\source$ to $C_\target$. Let us now consider the other possible move: $u \to{} x$. If the next move that touches $x$ is $x \to{} u$, we again simply ignore these two steps. Let $D_i$ be the dominating set of $S'$ to which the move $u \to{} x$ is applied. Recall that when a token is moved from a vertex $a$ to a vertex $z_{ab}$ (for some neighbor $b$ of $a$), it is followed by $z_{ab} \to{} b$. Therefore, we know that $D_i$ does not contain any vertex of the form $z_{uv}$. So for every edge $uu'$ incident to $u$, $D_{i+1}$ must contain $u$ (this is possible if $u$ has at least two tokens in $D_i$) or $u'$ in order to dominate $z_{uu'}$. Hence, $C_{i+1} = D_{i+1} \setminus \{x\}$ is a vertex cover of $G$. If the next move that touches $x$ is $x \to{} u'$, one can safely replace these two moves $u \to{} x$ and $x \to{} u'$ by $d$ moves where $d$ is the distance between $u$ and $u'$ in $G$.

Therefore, one can obtain from $S'$ a {\sf TS}-sequence that reconfigures $C_s$ into $C_\target$ and thus $(G,C_s,C_\target)$ is a \textsf{yes}-instance for \VCR, as desired. This concludes the proof of Theorem~\ref{thm:bipartite}.
\end{proof}

Next, we prove that \DSR\ is PSPACE-complete on planar graphs of maximum degree 6 and bounded bandwidth graphs.
Recall that a graph has \emph{bandwidth} at most $k$ if there exists a numbering $\ell$ of the vertices with distinct integers between 1 and $n$ (where $n$ is the number of vertices of the graph) such that adjacent vertices must have labels at distance at most $k$ (i.e., for every edge $uv \in E$, $|\ell(u) - \ell(v)| \le k$).

\begin{theorem} \label{thm:planar-bandwidth}
\DSR\ is PSPACE-complete on planar graphs of maximum degree 6 and bounded bandwidth graphs.
\end{theorem}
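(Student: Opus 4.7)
I plan to give a polynomial-time reduction from \VCR, which is PSPACE-complete both on planar graphs of maximum degree $3$ and on bounded-bandwidth graphs, as recalled just before the statement. The construction is the classical Bertossi-style reduction from \textsc{Vertex Cover} to \textsc{Dominating Set}. Given an instance $(G, C_\source, C_\target)$ of \VCR, I define $G'$ by adding, for every edge $uv \in E(G)$, a new vertex $e_{uv}$ adjacent to $u$ and to $v$, while keeping all the edges of $G$. Set $D_\source = C_\source$ and $D_\target = C_\target$. Assuming WLOG that $G$ has no isolated vertex, the key structural identity
$$N_{G'}[e_{uv}] = \{u,v,e_{uv}\} \subseteq N_{G'}[u] \cap N_{G'}[v]$$
yields that a set $S \subseteq V(G)$ is a dominating set of $G'$ if and only if $S$ is a vertex cover of $G$; in particular $D_\source$ and $D_\target$ are dominating sets of $G'$ of the same size as $C_\source$ and $C_\target$.

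\textbf{Structural properties of $G'$.} The graph $G'$ is planar since each new vertex $e_{uv}$ can be drawn inside a face incident to $uv$ in a fixed planar embedding of $G$; its maximum degree is at most $6$, as every $v \in V(G)$ gains $\deg_G(v) \leq 3$ new neighbors on top of its $\leq 3$ original ones, while every $e_{uv}$ has degree exactly $2$. For the bounded-bandwidth variant, starting from a bandwidth-$b$ labeling $\ell$ of $G$, I would build a labeling of $V(G')$ by reserving $b+1$ consecutive slots per original vertex: at positions $(b+1)(i-1)+1, \ldots, (b+1)i$, place the vertex $v$ with $\ell(v)=i$ at the first slot, followed by the vertices $e_{vw}$ for each $w$ with $\ell(w) > i$ (at most $b$ such $w$, so they all fit). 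A direct check shows the resulting bandwidth is $O(b^2)$, still a constant.

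\textbf{Correspondence of sequences.} The forward direction is immediate: any slide $u \to{} v$ valid in $G$ is also valid in $G'$ (since $uv \in E(G')$), and by the equivalence above the corresponding sets remain dominating sets of $G'$. For the backward direction, given a {\sf TS}-sequence $\langle D_0 = D_\source, \ldots, D_p = D_\target \rangle$ in $G'$, I plan to eliminate all excursions through $e$-vertices exactly as in Theorem~\ref{thm:bipartite}: whenever a token moves $u \to{} e_{uv}$, consider the next move touching $e_{uv}$; if it is $e_{uv} \to{} u$ the two moves cancel, otherwise it is necessarily $e_{uv} \to{} v$ (since $e_{uv}$ has degree $2$), and the pair is replaced by the single slide $u \to{} v$ along $uv \in E(G)$. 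The inclusion $N_{G'}[e_{uv}] \subseteq N_{G'}[v]$ guarantees that swapping $e_{uv}$ for $v$ in every intervening multiset keeps it a dominating set, and a reordering argument identical to the one in Theorem~\ref{thm:bipartite} lets us assume the two moves are consecutive. After processing all excursions, every intermediate multiset lies in $V(G)$ and is thus a vertex cover of $G$, producing the desired {\sf TS}-sequence for \VCR\ in $G$.

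\textbf{Main obstacle.} The delicate step, as in the bipartite case, is precisely this rewriting: one must argue that when several tokens may superpose on $e_{uv}$, and when the move $u \to{} e_{uv}$ is separated from its companion by many other moves, the intermediate multisets can be uniformly edited (swapping $e_{uv}$ for $v$) while preserving both size and the DS property. All other steps, namely the structural verification of $G'$ and the forward direction, are routine once the correspondence of solutions is established.
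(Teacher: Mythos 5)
Your proposal is correct and follows essentially the same route as the paper: the classical Bertossi-style reduction from \VCR\ (keeping the original edges and adding a degree-two vertex per edge), the observation that planarity is preserved with maximum degree at most $2\cdot 3=6$, the slot-reservation relabeling giving bandwidth $O(b^2)$, and the elimination of excursions through the edge-vertices via $N_{G'}[e_{uv}]\subseteq N_{G'}[v]$ exactly as in Theorem~\ref{thm:bipartite}. If anything, you spell out the correspondence of sequences in more detail than the paper, which defers that part to the argument of Theorem~\ref{thm:bipartite} and to the earlier {\sf TAR} reduction.
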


\begin{proof}
First, recall that \VCR\ is PSPACE-complete on planar graphs of maximum degree 3~\cite{Hearn:2005:PSP:1140710.1140715, BONSMA20095215} and on bounded bandwidth graphs~\cite{WROCHNA20181}.
The proof for dominating sets reconfiguration under {\sf TAR} on planar graphs from \cite{Haddadan:2016:CDS:3010167.3010682} works also here since \VCR\ is PSPACE-complete on planar graphs. We use the well-known reduction mentioned in Theorem \ref{thm:bipartite}, which is the following: start with a copy of the original graph $G$ and for each edge $uv$, add a vertex $z_{uv}$ of degree two adjacent to $u$ and $v$. Let $G'$ be the resulting graph, and note that the planarity property of $G'$ is preserved.

Let $G$ be a graph whose bandwidth is bounded by some constant $k$. 
Since a vertex can have at most $k$ neighbors of lower label and $k$ neighbors of higher label, this implies that the maximum degree of $G$ is bounded by $2k$.
We use this observation to prove that the graph $G'$ obtained from the reduction has its bandwidth bounded by $k \cdot (k+1)$. 
We explain how to obtain a labeling $\ell'$ of $G'$ from $\ell$ that satisfies the bandwidth property. The underlying idea is to leave $k$ free values between any two vertices labeled consecutively in the original labeling (i.e., vertices $u$ and $v$ such that $\ell(v) = \ell(u)+1$) in order to label the vertices in $V(G') \setminus V(G)$.

More precisely, for all $i>1$, we relabel the vertex labeled $i$ in $\ell$ with the label $1+(i-1) \cdot (k+1)$ in $\ell'$. Let $u$ and $v$ be two adjacent vertices of $G$ with $\ell(u) < \ell(v)$, then $\ell(v) - \ell(u) \le k$ and thus $\ell'(v)-\ell'(u) \le k\cdot (k+1)$. Moreover, we label the new vertex $z_{uv}$ with label $\ell'(u) + (\ell(v)-\ell(u))$, which lies between $\ell'(u) + 1$ and $\ell'(u) + k$ by the bandwidth hypothesis. We also have $\ell'(v) - \ell'(z_{uv}) < k \cdot (k+1)$, and the bandwidth condition is satisfied.
So the difference between the labels of any two adjacent vertices in $G'$ is at most $k \cdot (k+1)$. 

Observe however that not all vertices have $k$ neighbors of higher label in $G$, and thus the labeling $\ell'$ does not use consecutive values. To fix this, we just relabel the graph 
with values between 1 and $|V(G')|$, maintaining the ordering of $\ell'$. The new labeling $\ell''$ obtained does not increase the distance from $\ell'$, and thus satisfies the bandwidth condition, as required.
\end{proof}

B\"ottcher et al. observed that the \emph{pathwidth} and thus the \emph{treewidth} of a graph are bounded by its bandwidth \cite{BOTTCHER20101217}. Therefore, we immediately get from Theorem \ref{thm:planar-bandwidth} that \DSR\ is PSPACE-complete for bounded pathwidth and bounded treewidth graphs.

\section{Polynomial-time algorithms}\label{sec:poly}
In this section, we focus on graph classes for which \DSR\ can be solved in polynomial time. A natural way to solve this problem is to distinguish a special dominating set (that we call \emph{canonical}) and then show that each dominating set can be reconfigured into this special one \cite{Haddadan:2016:CDS:3010167.3010682}.
The canonical dominating set is not part of the original instance, so it is crucial to be able to compute it in polynomial time if we aim to compute the reconfiguration sequence in polynomial time. However, this is not an issue if we are only interested in the decision problem. We emphasize the fact that this canonical dominating set must be uniquely defined, i.e., the set of vertices that hold a token as well as the number of tokens on each of these vertices must be fixed.

\subsection{Joins and cographs}\label{join-cographs}

In this section, we prove the following theorem, that is of special interest for the case of cographs. Recall that the domination number of a join $G_1+G_2$ is always at most two, since taking a vertex from each operand of the join dominates the whole graph.

\begin{theorem} \label{thm:join} Let $G_1$ and $G_2$ be two graphs, and $D_\source$ and $D_\target$ be two dominating sets of $G_1+G_2$ of the same size. The dominating set $D_\source$ can be reconfigured into $D_\target$ by token sliding if and only if one of the three following conditions holds:
\begin{enumerate}[(i)]
\item $|D_\source| = |D_\target| \ge 3$,
\item the domination number of $G_1$ or of $G_2$ is at most two,
\item both $G_1$ and $G_2$ are connected.
\end{enumerate}
\end{theorem}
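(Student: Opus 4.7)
The plan is to prove the two implications separately: for the forward direction I would exhibit, for each of the three sufficient conditions, a canonical dominating set of size $k$ and show that every dominating set of that size reconfigures to it; for the converse I would exhibit an explicit pair of dominating sets that cannot be swapped when none of the conditions holds.

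For case (i), with $k \ge 3$, I fix $u^* \in V(G_1)$ and $v^* \in V(G_2)$ and declare as canonical the multiset with $k-1$ tokens on $u^*$ and one on $v^*$. The key observation is that any multiset of total size at least two with at least one token in each side of the join is dominating: a single token in $V(G_1)$ already dominates all of $V(G_2)$ via join edges, and conversely. To move a token from $u \in V(G_1)\setminus\{u^*\}$ to $u^*$, I slide $u$ to $v^*$ (a valid join move, with superposition on $v^*$, which already carries a token), then slide $v^*$ to $u^*$; the intermediate multiset has $k-2 \ge 1$ tokens in $V(G_1)$ and two in $V(G_2)$, hence dominates. A short preliminary step brings $D$ into the ``at least one token per side'' regime if initially $D \subseteq V(G_i)$, via a single join move; a symmetric superposition argument moves the unique $V(G_2)$-token to $v^*$ once $\ge 2$ tokens sit in $V(G_1)$, after which the $V(G_1)$-tokens are relocated to $u^*$ one by one.

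For case (ii), say $\gamma(G_1) \le 2$; since (i) already handles $k \ge 3$, only $k \in \{1,2\}$ remains. If $\gamma(G_1) = 1$ the witness $u^*$ is universal in $G_1+G_2$, so any token slides onto $u^*$ in one move with domination preserved. If $\gamma(G_1) = 2$ with witness $\{u_1^*, u_2^*\}$ and $k = 2$, I take $\{u_1^*, u_2^*\}$ as canonical; every $a \in V(G_1)$ lies in $N_{G_1}[u_1^*] \cup N_{G_1}[u_2^*]$, so a short case analysis (reducing the ``both in $V(G_1)$'' or ``both in $V(G_2)$'' sub-cases to the mixed one via one cross-move) reconfigures any size-$2$ dominating set into $\{u_1^*,u_2^*\}$ in at most three moves. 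For case (iii), the only situation not already covered is $k = 2$ with $\gamma(G_1), \gamma(G_2) \ge 3$: then every size-$2$ dominating set contains exactly one token of each side, and the connectedness of $G_1$ and $G_2$ lets me slide each token independently along a path within its own side.

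For the converse, assume none of (i), (ii), (iii) holds, so $k \le 2$, both $\gamma(G_i) \ge 3$, and WLOG $G_1$ is disconnected. The case $k=1$ is vacuous, since $\gamma(G_i) \ge 3$ precludes a universal vertex of $G_1+G_2$. For $k=2$, every dominating set has one token in each side, and any cross-move is forbidden: it would yield a size-$2$ subset of one side that would need to dominate that side, contradicting $\gamma(G_i) \ge 3$. Hence the $V(G_1)$-token stays in its component of $G_1$ throughout any reconfiguration sequence, and picking $a_s, a_t$ in distinct components of $G_1$ together with any $b \in V(G_2)$ yields the desired obstruction. The hardest step I anticipate is case (i): engineering a schedule of slides through the transit vertices $u^*$ and $v^*$ that preserves domination at every intermediate multiset relies crucially on $k \ge 3$ (to leave $\ge 1$ token in each side during each transit) and on the freedom to superpose tokens; once this routing scheme is in place, the remaining cases reduce to small case checks.
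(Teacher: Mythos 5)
Your proposal is correct and follows essentially the same strategy as the paper: the same obstruction for the converse (a disconnected $G_1$ with $\gamma(G_1),\gamma(G_2)\ge 3$ forbidding cross-moves, so the $V(G_1)$-token is trapped in its component), and, for each sufficient condition, a canonical dominating set reached via join edges with token superposition. The one detail to tighten in case (i) is the transit of the last $V(G_2)$-token onto $v^*$: it must be done as a swap (first slide some $V(G_1)$-token onto $v^*$, then slide the old $V(G_2)$-token up into $V(G_1)$), because the literally ``symmetric'' two-hop through $V(G_1)$ would momentarily leave $V(G_2)$ without any token, and an arbitrary $k$-multiset contained in $V(G_1)$ need not dominate $V(G_1)$ --- this is consistent with, and required by, the invariant you already state of keeping at least one token on each side.
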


\begin{proof}
We first show that if none of these conditions hold, then $(G_1 + G_2, D_\source, D_\target)$ is a {\sf no}-instance. Let $G_1$ and $G_2$ be two graphs with $\gamma(G_1) > 2$ and $\gamma(G_2) > 2$, and assume without loss of generality that $G_1$ is not connected, say with two components $C_1$ and $C_2$. Note that $\gamma(G_1+G_2) = 2$ since neither $G_1$ nor $G_2$ has a universal vertex.

Let $D_\source =\{u,v\}$ and $D_\target = \{w,v\}$ be two minimum dominating sets of $G$ with $u \in C_1$, $w \in C_2$ and $v \in V(G_2)$. We prove that $D_\source$ can not be reconfigured into $D_\target$. Since $G_1$ is not connected, there is no path between $u$ and $w$ in $G[V_1]$. Therefore, the only way to reach $w$ from $u$ is to go through $V(G_2)$. But since $\gamma(G_2) > 2$ no pair of vertices in $G_2$ can dominate $G_2$, and thus no move from $V(G_1)$ to $V(G_2)$ is possible.
\smallskip

We now prove that each of the above conditions is sufficient for the dominating sets to be reconfigured.
\paragraph{Condition (i)}
Suppose $|D_\source| = |D_\target| \ge 3$. Recall that picking a vertex of $G_1$ and one of $G_2$ always forms a dominating set of $G_1 + G_2$. We infer that it is always possible to make one move from $D_\source$ to reach a configuration with tokens in both $G_1$ and $G_2$, then from such position tokens can be slid freely in their part, until reaching $D_\target$ with a last move.

\smallskip

We assume now that $|D_\source| = |D_\target| \le 2$.

\paragraph{Condition (ii)}
For the case when $G_1$ or $G_2$ has domination number at most two, we consider different cases depending on whether a graph has domination number one or not.
\begin{case} \normalfont
If $\gamma(G_1) = 1$ or $\gamma(G_2) = 1$: then $G_1+G_2$ contains a universal vertex. Then, from $D_\source$, one can place a token on this vertex, reconfigure other possible tokens freely, then move that token to reach $D_\target$.
\end{case}

\begin{case} \normalfont
If $\gamma(G_1) = 2$ and $\gamma(G_2) = 2$. Assume without loss of generality that $\gamma(G_1)=2$. Note that in this case, $\gamma(G_1+G_2)=2$, let $ D_\source= \{v_1,v_2\}$. We define an arbitrary canonical dominating set $C$ by taking a vertex (e.g., of smallest index) in each of $G_1$ and $G_2$; we denote these vertices $u_1 \in V(G_1)$ and $u_2 \in V(G_2)$. Recall that each reconfiguration sequence is reversible. Hence, it is sufficient to prove that both $D_\source$ and $D_\target$ can been transformed into $C$. We only show this statement for $D_\source$; the proof for $D_\target$ follows by symmetry.

Suppose first that $v_1$ and $v_2$ belong to the same original graph, say $v_1, v_2 \in V(G_1)$. We show how to reconfigure $D_\source$ into $C$ in at most two steps. First, observe that since $C$ is a dominating set of $G$, $u_1 \in N[\{v_1,v_2\}]$, say $u_1$ belongs to $N[v_1]$. Our first step is to slide the token from $v_2$ to $u_2$, along the corresponding edge of the join. Then, by our observation that $u_1 \in N[v_1]$, we can slide if necessary the token from $v_1$ to $u_1$.

Suppose now that $v_1$ and $v_2$ belong respectively to $V(G_1)$ and $V(G_2)$. Since $\gamma(G_1)=2$, let $\{w_1,w_2\}$ be a dominating set of $G_1$ and thus of $G_1+G_2$ (it can be computed naively in cubic time. It dominates $v_1$ so assume without loss of generality that $v_1w_1$ is an edge. First moving the token from $v_1$ to $w_1$ (if $v_1 \neq w_1$) and then from $v_2$ to $w_2$, at most two steps permit us to reconfigure $D_\source$ into $\{w_1,w_2\}$, which we can then reconfigure into $C$ by the above argument.
\end{case}

\paragraph{Condition (iii)}
Suppose finally that $\gamma(G_1) \ge 3$ and $\gamma(G_2) \ge 3$ but both $G_1$ and $G_2$ are connected. Then $\gamma(G_1+G_2) = 2$ and the minimum dominating sets of $G_1+G_2$ are exactly the sets containing a vertex in $G_1$ and a vertex in $G_2$. Let $D_\source = \{v_1,v_2\}$ and $D_\target = \{w_1,w_2\}$ with $v_1,w_1 \in V(G_1)$ and $v_2,w_2 \in V(G_2)$. Since $G_1$ is connected, there exists a path from $v_1$ to $w_1$ in $G[V(G_1)]$. Moving the token along this path, we always keep a dominating set by the above observation. Doing similarly along a path from $v_2$ to $w_2$, we have a reconfiguration sequence from $D_\source$ to $D_\target$. \qedhere
\end{proof}

We now consider the special case of cographs. Recall that the family of cographs can be defined as the family of graphs with no induced $P_4$, or equivalently by the following recursive definition:

\begin{itemize}
    \item $K_1$ is a cograph;
    \item for $G_1$ and $G_2$ any two cographs, the disjoint union $G_1 \cup G_2$ is a cograph;
    \item for $G_1$ and $G_2$ any two cographs, the join $G_1 + G_2$ is a cograph.
\end{itemize}

Brandelt and Mulder gave in \cite{BANDELT1986182} an alternative characterization of cographs: $G$ is a cograph if and only if $G$ is the disjoint union of distance-hereditary graphs with diameter at most two. Note that computing a minimum dominating set in distance-hereditary graphs is linear-time solvable~\cite{Nicolai01}. Hence, we can compute the domination number of a cograph in linear time as well.

By the previous theorem, we infer that if a cograph is constructed as a join of two cographs, the case is polynomial-time decidable. The case when $G = K_1$, is straightforward. If $G = G_1 \, \cup \, G_2$ is the disjoint union of two cographs, then for two dominating sets $D_\source$ and $D_\target$, deciding whether $D_\source \reconf{{\sf TS}} D_\target$ is equivalent to deciding whether $D_\source\cap V(G_1) \reconf{{\sf TS}} D_\target \cap V(G_1)$ in $G_1$, and $D_\source \cap V(G_2) \reconf{{\sf TS}} D_\target \cap V(G_2)$ in $G_2$, which can be done inductively by induction. As a consequence, we obtain the following:

\begin{theorem}\label{th:cographs}
There is a polynomial-time algorithm deciding \DSR\ in cographs.
\end{theorem}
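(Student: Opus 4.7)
The plan is to exploit the recursive structure of cographs via their cotree decomposition. Since the cotree of a cograph $G$ can be computed in linear time, we may assume we are given a rooted tree whose internal nodes are labeled either \emph{disjoint union} or \emph{join} and whose leaves correspond to the vertices of $G$. I would then design a recursive algorithm that, given a subgraph $H$ attached at some node of the cotree and two dominating sets $D_\source, D_\target$ of $H$ of equal size, decides whether $D_\source \reconf{{\sf TS}} D_\target$.

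As a preprocessing step, I would compute bottom-up on the cotree, for each subgraph $H$, its domination number $\gamma(H)$ (using $\gamma(K_1)=1$, $\gamma(H_1 \cup H_2) = \gamma(H_1) + \gamma(H_2)$, and $\gamma(H_1 + H_2) = 1$ if $\min(\gamma(H_1),\gamma(H_2)) = 1$ and $2$ otherwise), together with a flag indicating whether $H$ is connected (true for a join of two non-empty subcographs, false for a disjoint union of two non-empty subcographs). Both can be computed in linear total time and let us check conditions (ii) and (iii) of Theorem~\ref{thm:join} instantly at any node.

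Then the decision procedure recurses on the root of the cotree. The base case $H = K_1$ is trivial, as any two dominating multisets of equal size coincide. If $H = H_1 \cup H_2$ is a disjoint union, since $V(H_1)$ and $V(H_2)$ lie in distinct connected components of $H$, no slide can move a token between them; hence $D_\source \reconf{{\sf TS}} D_\target$ in $H$ if and only if $|D_\source \cap V(H_j)| = |D_\target \cap V(H_j)|$ for $j \in \{1,2\}$ and $D_\source \cap V(H_j) \reconf{{\sf TS}} D_\target \cap V(H_j)$ in $H_j$ for $j \in \{1,2\}$. Note that $D_\source \cap V(H_j)$ is automatically a dominating set of $H_j$, since a vertex of $H_j$ has no neighbor outside $V(H_j)$, so the recursive calls are on valid instances. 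If $H = H_1 + H_2$ is a join, we invoke Theorem~\ref{thm:join}: we answer \textsf{yes} iff at least one of the three conditions (i)--(iii) holds, which is decidable in constant time from the precomputed values and from $|D_\source|$.

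There is no real obstacle in this plan: Theorem~\ref{thm:join} already handles the delicate join case, and the disjoint-union recursion amounts to the elementary observation that slides respect connected components (the ``if'' direction combines independent reconfiguration sequences in each component; the ``only if'' direction follows from the fact that every slide uses an edge internal to some $H_j$). Since the cotree has $O(n)$ nodes and each recursive call performs only constant additional work beyond its children, the total running time is polynomial.
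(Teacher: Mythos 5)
Your proposal is correct and follows essentially the same route as the paper: recurse on the cograph decomposition, handle $K_1$ trivially, split disjoint unions component-wise, and resolve joins via Theorem~\ref{thm:join}. You are slightly more careful than the paper's own (terse) argument in making explicit the necessary condition $|D_\source \cap V(H_j)| = |D_\target \cap V(H_j)|$ for the disjoint-union case and the bottom-up precomputation of $\gamma$ and connectivity, but the underlying idea is identical.
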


\subsection{Dually chordal graphs} \label{dually}

\begin{figure}[bt]
    \centering
    \includegraphics[width=0.5\textwidth]{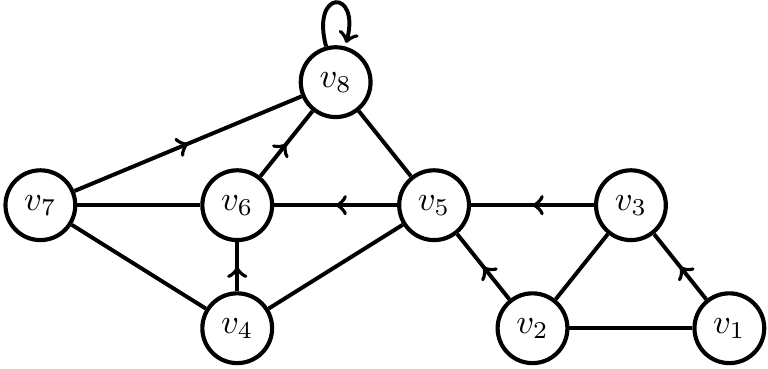}
    \caption{A dually chordal graph.}
    \label{fig:duallyl}
\end{figure}

Let $G=(V,E)$ be a graph with $V = \{v_1, v_2, \ldots, v_n\}$. We denote by $G_i$ the graph $G[\{v_i, v_{i+1}, \ldots, v_n\}]$. A \emph{maximum neighbor} of a vertex $u$ is a vertex $v \in N[u]$ such that we have $N[w] \subseteq N[v]$ for every vertex $w \in N[u]$. In other words, $v$ contains in its closed neighborhood every vertex at distance at most two from $u$. A \emph{maximum neighborhood ordering} (or \mno\ for short) is an ordering of the vertices in such a way that $v_i$ has a maximum neighbor in the graph $G_i$. A graph is \emph{dually chordal} if it has a maximum neighborhood ordering. This ordering can be computed in linear time \cite{BRANDSTADT199843}. Moreover, the \mno\ computed by this algorithm is such that for every vertex $v_i$ (with $i < n$), $v_i's$ maximum neighbor is different from $v_i$ (for connected graphs). An alternative proof of a similar statement for not necessarily connected graphs can be found in \cite{MR3273541}. In the following, we always assume that an \mno\ is associated with a function $mn: V\longrightarrow V$ that associates with each vertex a maximum neighbor.

Note that a dually chordal graph is not necessarily chordal. Figure \ref{fig:duallyl} gives an example of a graph which is dually chordal but not chordal, since it contains an induced cycle on four vertices. The label inside each vertex corresponds to its rank in the ordering, and its maximum neighbor is the endpoint of its single outgoing edge (note that $v_8$'s maximum neighbor is itself). Moreover, observe that any tree $T$ is a dually chordal graph: root the tree in some vertex and orient all edges toward the root; any numbering keeping all $G_i$ connected is an \mno\ where arcs point towards the vertex maximum neighbor.

\paragraph{Link with interval graphs.}
An interval graph is the intersection graph of a family of intervals on the real line. In other words, let \{$I_1, I_2, \ldots, I_n\}$ be a set of intervals. Each interval $I$ can be represented by its extremities $\ell(I), r(I)$ with $\ell(I) \le r(I) \in \mathbb{R}$. We call these values respectively the $\ell$-value and $r$-value (for left and right). The corresponding interval graph $G=(V,E)$ is the following:

\begin{itemize}
    \item $V = \{I_1, I_2, \ldots, I_n\}$;
    \item $I_iI_j \in E \Leftrightarrow I_i \cap I_j \neq \emptyset$, i.e., $\ell(I_j) \le r(I_i)$ and $\ell(I_i) \le r(I_j)$.
\end{itemize}

Let $G=(V,E)$ be an interval graph. For convenience, we denote by $v_i$ the vertex related to the interval $I_i$. We now order the vertices of $G$ with respect to their $r$-values, i.e., $v_i < v_j$ if and only if $r(I_i) < r(I_j)$ (or $r(I_i) = r(I_j)$ and $\ell(I_i) < \ell(I_j)$). Then, we get the following useful property:

\begin{obs} \label{obs:edge}
 Let $v_i$ and $v_j$ be two vertices of $G$ such that $v_i < v_j$. If $v_iv_j \in E$, then for any $v_k$ such that $v_i < v_k < v_j$, we have $v_kv_j \in E$.
\end{obs}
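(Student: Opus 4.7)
The plan is a direct unfolding of the two definitions at play, so no real obstacle is expected. I would set up the following notation and then chain three inequalities.

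First I would recall that, by the ordering of the vertices, $v_a < v_b$ means $r(I_a) \le r(I_b)$ (the tiebreak on $\ell$-values does not affect the argument). In particular, from $v_i < v_k < v_j$ we get the chain $r(I_i) \le r(I_k) \le r(I_j)$. Second, I would rewrite the hypothesis $v_iv_j \in E$ as the two intersection inequalities $\ell(I_j) \le r(I_i)$ and $\ell(I_i) \le r(I_j)$, and state the goal $v_kv_j \in E$ as the analogous pair $\ell(I_j) \le r(I_k)$ and $\ell(I_k) \le r(I_j)$.

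The verification is then immediate. For the first goal inequality, combine $\ell(I_j) \le r(I_i)$ (from the edge $v_iv_j$) with $r(I_i) \le r(I_k)$ (from $v_i < v_k$) to obtain $\ell(I_j) \le r(I_k)$. For the second, use only the trivial bound $\ell(I_k) \le r(I_k)$ together with $r(I_k) \le r(I_j)$ (from $v_k < v_j$) to conclude $\ell(I_k) \le r(I_j)$. Both intersection conditions hold, so $I_k \cap I_j \neq \emptyset$, i.e., $v_kv_j \in E$, as required.

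Since every step is a one-line inequality, the only care required is to make sure the tie-breaking rule in the definition of the ordering does not hide a subtle case; but since we never use strict inequalities on the $r$-values, this does not arise. I would therefore present the proof as a short three-inequality calculation and move on.
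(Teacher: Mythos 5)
Your proof is correct and is essentially identical to the paper's own argument: both derive $\ell(I_j) \le r(I_k)$ from $\ell(I_j) \le r(I_i) \le r(I_k)$ and the second intersection condition from $\ell(I_k) \le r(I_k) \le r(I_j)$. Nothing further is needed.
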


\begin{proof}
Since $v_i < v_k < v_j$, we have $r(I_i) \le r(I_k) \le r(I_j)$. Since $v_iv_j$ is an edge, $\ell(I_j) \le r(I_i)$. Thus, we get that $\ell(I_j) \le r(I_k)$. Adding that $\ell(I_k)\le r(I_k)\le r(I_j)$, the conclusion follows.
\end{proof}

\begin{obs}
Interval graphs are dually chordal graphs.
\end{obs}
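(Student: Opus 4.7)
The plan is to show that the vertex ordering $v_1, \ldots, v_n$ by increasing $r$-value (ties broken by $\ell$-value, as introduced just before Observation~\ref{obs:edge}) is a maximum neighborhood ordering of $G$. For each $i$, I would define $mn(v_i)$ to be the vertex of $N_{G_i}[v_i]$ whose interval has the largest $r$-value---equivalently, the one with the largest index in the ordering (if $v_i$ is isolated in $G_i$, this is $v_i$ itself). The whole task then reduces to proving that $mn(v_i)$ is a maximum neighbor of $v_i$ in $G_i$, which is exactly the defining property of an \mno.

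To establish the claim, write $v_j = mn(v_i)$ and fix arbitrary $w \in N_{G_i}[v_i]$ and $x \in N_{G_i}[w]$; the goal is to verify $x \in N_{G_i}[v_j]$. On the one hand, $x$ and $w$ are equal or adjacent, so their intervals meet, giving $\ell(I_x) \leq r(I_w)$; combining with the maximality in the definition of $v_j$, which yields $r(I_w) \leq r(I_j)$, one obtains $\ell(I_x) \leq r(I_j)$. On the other hand, $v_j \in N_{G_i}[v_i]$ gives $\ell(I_j) \leq r(I_i)$, and the crucial fact that $x$ lives in $G_i$ guarantees $r(I_x) \geq r(I_i)$ by the ordering, so $\ell(I_j) \leq r(I_x)$. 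Together these inequalities force $I_j \cap I_x \neq \emptyset$ (the degenerate case $x = v_j$ is immediate), hence $v_j x \in E(G_i)$, as required.

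The argument is really only interval arithmetic: it uses the definition of the ordering, the definition of $mn(v_i)$, and the fact that adjacency in an interval graph translates to non-empty intersection. There is no substantial obstacle, but the subtle point worth stressing is that the inequality $r(I_x) \geq r(I_i)$ is exactly what restricting attention to $G_i$ buys us; without it, the $r$-maximality of $v_j$ in $N_{G_i}[v_i]$ would not propagate to an arbitrary vertex at distance two from $v_i$. Since each $v_i$ admits a maximum neighbor in $G_i$, the ordering is an \mno\ and $G$ is dually chordal.
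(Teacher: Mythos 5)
Your proof is correct and follows essentially the same route as the paper: order the vertices by increasing $r$-value and take as $mn(v_i)$ the neighbor of $v_i$ in $G_i$ of maximum index. The only difference is that you verify $N_{G_i}[w] \subseteq N_{G_i}[mn(v_i)]$ by direct interval arithmetic on the endpoints, whereas the paper splits into two cases according to the position of the distance-two vertex and invokes Observation~\ref{obs:edge}; both verifications are sound.
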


\begin{proof}
To see this observation, we prove that the ordering described above is an \mno. For every vertex $v_i$, we show that its neighbor of maximum index $v_j$ in the ordering is a maximum neighbor. Indeed, consider any neighbor $v_k$ of $v_i$ in $G_i$. By definition of $v_j$, we have $v_i < v_k \le v_j$, and $v_k$ is adjacent to $v_j$. Moreover, any other neighbor $v_\ell > v_i$ of $v_k$ either satisfies $v_i < v_\ell < v_j$ or $v_k < v_j < v_\ell$. In both cases, Observation~\ref{obs:edge} concludes the proof.
\end{proof}

An example of the construction used above on an interval graph is given in Figure \ref{fig:interval} where the maximum neighbor of $I_i$ is its only out-neighbor (the endpoint of the only directed edge incident to $I_i$). 

\begin{figure}[bt]
    \centering
    \subfloat[Set of intervals]{
        \centering
        \includegraphics[width=0.7\textwidth]{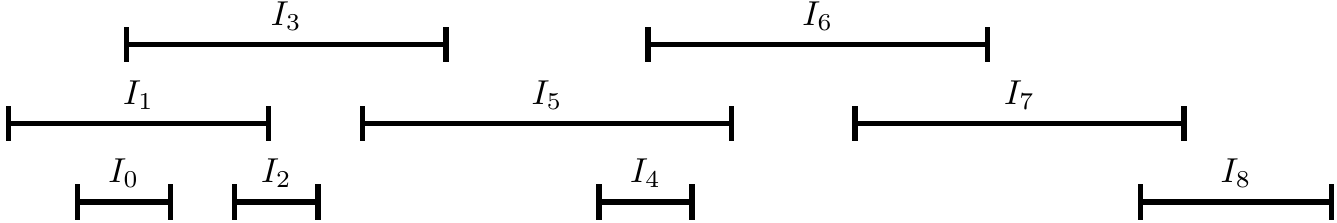}
    }

    \centering
    \subfloat[Corresponding interval graph]{
        \centering
        \includegraphics[width=0.7\textwidth]{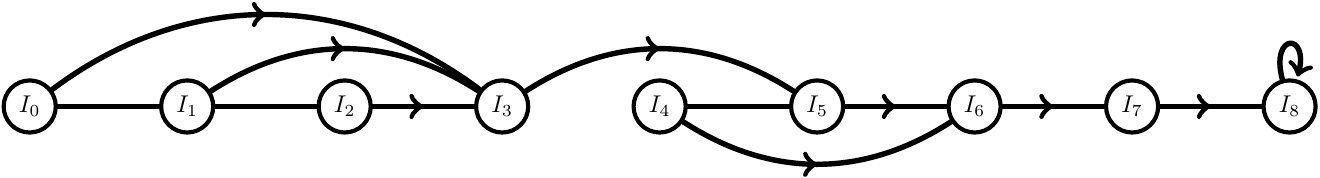}
    }
    \caption{Interval graph and its maximum neighborhood ordering.}
    \label{fig:interval}
\end{figure}

\paragraph{Computing the canonical dominating set.}
Let $G$ be a dually chordal graph, whose vertices are ordered by an \mno. 
Let $C=\{c_1,c_2,\ldots,c_k\}$ be a dominating set of $G$ and $T=\{t_1, t_2,\ldots,t_k\}$ a set of vertices, both sets in increasing order according to the \mno. We say that $C$ is a \emph{triggered dominating set} with triggering vertices $T$ if and only if:
\begin{enumerate}[(i)]
\item $c_i = mn(t_i)$ for all $1\le i \le k$, \label{item1}
\item following the \mno, $t_i$ is the least vertex not in $N[c_1,\ldots,c_{i-1}]$, for all $1\le i \le k$.\label{item2}
\end{enumerate}

It is known that the \textsc{Minimum Dominating Set} problem is linear-time solvable on dually chordal graphs \cite{BRANDSTADT199843}. In our case, we give another algorithm to compute a triggered dominating set, that will serve as a canonical dominating set.

Observe that an \mno\ is associated with exactly one triggered dominating set.
The following algorithm, called \textsc{MDS}, is strongly inspired by the classical algorithm for computing minimum dominating sets in trees~\cite{MITCHELL197976}. It takes as input a dually chordal graph $G=(V,E)$ with an \mno\ and computes a triggered dominating set $C$ of size $\gamma(G)$ and its corresponding set of triggering vertices $T$ in running time $O(|V|+|E|)$.

\begin{algorithm}[ht]
\caption{\label{DS-dually}\textsc{MDS}}
\begin{algorithmic}[1]
\REQUIRE A dually chordal graph $G$ with an \mno.
\ENSURE A minimum triggered dominating set $C$ and its set of triggering vertices $T$.
\STATE Mark all vertices \textsc{Bounded}
\STATE $C \leftarrow \emptyset$
\FORALL {$i$ from 1 to $n$}
\IF {$v_i$ is labeled \textsc{Bounded}}
\STATE label $mn(v_i)$ with \textsc{Required}
\STATE Add $v_i$ to the set of triggering vertices $T$
\FORALL {$u \in N[mn(v_i)]$}
\IF {$u$ is not labeled \textsc{Required}}
\STATE Label $u$ with \textsc{Free}
\ENDIF
\ENDFOR
\ENDIF
\IF {$v_i$ is labeled \textsc{Required}}
\STATE $C \leftarrow C ~ \cup ~ \{v_i\}$
\ENDIF
\ENDFOR
\RETURN $C$ and $T$
\end{algorithmic}
\end{algorithm}

Lemma \ref{lemma:dually-MDS} is devoted to proving the correctness of the algorithm \textsc{MDS}.

\begin{lemma} \label{lemma:dually-MDS}
Given a dually chordal graph $G=(V,E)$, the algorithm \textsc{MDS} computes a triggered dominating set of $G$ of order $\gamma(G)$ in time $O(|V|+|E|)$.
\end{lemma}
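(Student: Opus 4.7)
My plan is to establish four things in turn: $(I)$ $C$ is a dominating set; $(II)$ $C$ and $T$ satisfy the triggered dominating set conditions (i) and (ii); $(III)$ $|C|=\gamma(G)$; and $(IV)$ the algorithm runs in $O(|V|+|E|)$ time.

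The first two items are essentially bookkeeping. For $(I)$, at each iteration where $v_i$ is still \textsc{Bounded}, the algorithm places $mn(v_i)$ into $C$ and labels every vertex of $N[mn(v_i)]$ (including $v_i$ itself) as \textsc{Free} or \textsc{Required}; hence, when the loop ends, no vertex remains \textsc{Bounded}, and every vertex either lies in $C$ or has a neighbor in $C$. For $(II)$, condition (i) holds by construction since $c_i = mn(t_i)$, and condition (ii) holds because at the moment $t_i$ is added to $T$, every earlier vertex in the MNO has already been labeled, and thus lies in $N[\{c_1, \ldots, c_{i-1}\}]$, while $t_i$ itself is \textsc{Bounded}; thus $t_i$ is the least vertex not dominated by $\{c_1, \ldots, c_{i-1}\}$.

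The main work is $(III)$. The central claim is that $T$ forms a $2$-packing in $G$, i.e., $N_G[t_i] \cap N_G[t_j] = \emptyset$ whenever $i \neq j$; once this is established, any dominating set of $G$ must spend a distinct vertex on each $N[t_i]$, giving $\gamma(G) \geq k = |C|$. First, for $i < j$ I would show $t_i$ and $t_j$ are non-adjacent: otherwise $t_j \in N_{G_{t_i}}[t_i]$, and applying the MNO at $t_i$ to $w = t_j$ would give $t_j \in N_{G_{t_i}}[t_j] \subseteq N_{G_{t_i}}[c_i]$, contradicting that $t_j$ is still \textsc{Bounded} at step $j$. Second, and this is the main obstacle, I would rule out common neighbors. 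Assume for contradiction that $u \in N[t_i] \cap N[t_j]$ and pick $u$ maximal in the MNO order among such common neighbors. If $u \geq t_i$, then applying the MNO at $t_i$ with $w = u$ yields $t_j \in N_{G_{t_i}}[u] \subseteq N_{G_{t_i}}[c_i]$, contradiction. Otherwise $u < t_i$, so both $t_i$ and $t_j$ lie in $G_u$ and are neighbors of $u$, which shows that $u$ is not the last vertex of its connected component in the MNO; hence $mn(u) > u$. Applying the MNO at $u$ with $w = t_i$ and with $w = t_j$, we find that $mn(u)$ is adjacent to both $t_i$ and $t_j$ (the degenerate equalities $mn(u) = t_i$ or $mn(u) = t_j$ being ruled out by the just-proven non-adjacency of $t_i$ and $t_j$), contradicting the maximality of $u$.

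For $(IV)$, when $v_i$ is processed as \textsc{Bounded}, the vertex $mn(v_i)$ cannot already be \textsc{Required} (otherwise $v_i$ would have been labeled \textsc{Free} at an earlier iteration), so each scan of $N[mn(v_i)]$ corresponds to a distinct vertex newly added to $C$. Summing over these scans gives total running time $O(|V| + \sum_{c\in C} \deg(c)) = O(|V|+|E|)$.
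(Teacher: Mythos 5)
Your proof is correct, but its core step $(III)$ takes a genuinely different route from the paper. The paper proves $|C|=\gamma(G)$ by an exchange/invariant argument: it introduces \emph{labeled dominating sets} (sets containing every \textsc{Required} vertex and dominating every \textsc{Bounded} one) and shows that each relabeling step of \textsc{MDS} preserves the labeled domination number, by repeatedly swapping the dominator of the least \textsc{Bounded} vertex $v_i$ for a maximum neighbor until one may assume it is $mn(v_i)$ itself. You instead exhibit an explicit lower-bound certificate: the triggering set $T$ is a $2$-packing, hence $\gamma(G)\ge |T|=|C|$. Your two sub-arguments (non-adjacency of $t_i,t_j$ via the maximum-neighbor property at $t_i$, and the extremal choice of a common neighbor $u$ that is pushed forward to $mn(u)$) are sound. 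The one delicate point is the inference ``$mn(u)>u$'' in the case $u<t_i$: this does not follow from the bare definition of a maximum neighbor (a vertex may be its own maximum neighbor), but it is exactly the property the paper assumes of its \mno\ --- that $mn(v)\neq v$ for every vertex that is not last in its component --- so your appeal to it is legitimate within the paper's setup, though it deserves an explicit pointer to that assumption. Your approach buys a stronger conclusion as a by-product, namely the duality $\gamma(G)=\rho(G)$ for dually chordal graphs together with $T$ as an optimality certificate; the paper's argument avoids the pairwise-disjointness analysis and is in that sense more self-contained. Parts $(I)$, $(II)$ and $(IV)$ coincide with the paper's brief treatment, and your observation that $mn(v_i)$ cannot already be \textsc{Required} when $v_i$ is \textsc{Bounded} is a nice way to make the $O(|V|+|E|)$ bound precise.
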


\begin{proof}
The fact that $C$ is a triggered dominating set with triggering vertices $T$ is a direct consequence of the construction of the algorithm. Statement \ref{item1} comes from line~5 of the algorithm, while statement~\ref{item2} simply comes from the fact that we deal with the vertices in increasing order in the loop of line~3.
Still we need to prove that this dominating set is of size $\gamma(G)$.

\smallskip

A {\em labeled graph} is a graph whose vertices are labeled \textsc{Free}, \textsc{Required} or \textsc{Bounded}, such that a vertex is labeled \textsc{Free} if and only if it is adjacent to a vertex labeled \textsc{Required} and it is not labeled \textsc{Required}.
Observe that the algorithm \textsc{MDS} maintains all along a labeled graph.
In a labeled graph, we define a \emph{labeled dominating set} a set of vertices containing all the vertices labeled \textsc{Required} and dominating all vertices labeled \textsc{Bounded}. The {\em labeled domination number} is the minimum size of a labeled dominating set. We show that the algorithm \textsc{MDS} keeps the labeled domination number of the graph invariant. At the beginning, when all the vertices are labeled \textsc{Bounded}, the labeled domination number of the graph is exactly its domination number. This will allow us to conclude that the set of vertices marked \textsc{Required} at the end forms a minimum dominating set of $G$, of order $\gamma(G)$.

Let $S$ be a minimum labeled dominating set of a labeled graph $G$, and let $v_i$ be the minimum vertex in the \mno\ that is labeled \textsc{Bounded}. Let $w$ be the maximum neighbor of $v_i$ in $G_i$. If $w\in S$, then $S$ is also a minimum labeled dominating set of the graph $G$ where $w$ is labeled \textsc{Required} and all its neighbors previously labeled \textsc{Bounded} are labeled \textsc{Free}, so the algorithm does not change the labeled domination number of $G$. 
Otherwise, say $v_j$ is the vertex that dominates $v_i$ in $S$. Since $v_i$ is marked \textsc{Bounded}, $v_j$ is not marked \textsc{Required}. If $j \ge i$, then by the maximum neighbor property, $w$ is adjacent to all the neighbors  of $v_j$ that are in $G_i$, so $w$ dominates all neighbors of $v_j$ that are still marked \textsc{Bounded}. Thus we can replace $v_j$ by $w$ in $S$ and keep a minimum labeled dominating set of $G$. This concludes the case $j \ge i$. 
Suppose now $j<i$. Consider $v_k$ the maximum neighbor of $v_j$ in $G_j$. Observe that since no vertex less than $v_i$ is \textsc{Bounded}, by the maximum neighbor definition, $v_k$ dominates any \textsc{Bounded} vertex adjacent to $v_j$. Thus, we can again replace $v_j$ by $v_k$ in $S$ and keep a minimum dominating set. We can iterate until the vertex dominating $v_i$ in $S$ is no less than $v_i$, and then refer to the above argument, used when $j \ge i$. 
This concludes the proof that the algorithm \textsc{MDS} keeps the labeled domination number of the graph invariant, and thus that it produces a minimum dominating set of $G$.

\smallskip

For the time complexity of the algorithm, observe that the algorithm visits every vertex at most once in the main loop, and it visits the neighborhoods of each vertex at most once when it possibly labels it \textsc{Required}. So the complexity is upper bounded by $\sum_{v\in V} O(1 + |N(v)|) = O(|V|+|E|)$.
\end{proof}

\paragraph{The reconfiguration algorithm.} 
We show how to use the canonical triggered dominating set $C$ computed by the \textsc{MDS} algorithm in order to reconfigure two dominating sets of a dually chordal graph. To do that, we provide an algorithm called \textsc{Dually-Chordal-Reconf} that modifies any dominating set $D$ of a dually chordal graph in such a way that $C \subseteq D$. 
The idea of this algorithm is to pick one vertex in $D$ that dominates the triggering vertex $t_i$ (from the output of algorithm~\textsc{MDS}) and to replace it by the corresponding vertex $c_i$ of $C$. Recall that the notation $u\to{{\sf TS}} v$ is used for sliding the token along the edge $uv$.

\begin{algorithm}[bt]
\caption{\label{Dually-Reconf}\textsc{Dually-Chordal-Reconf}}
\begin{algorithmic}[1]
\REQUIRE A dually chordal graph $G=(V,E)$, a minimum dominating set $D$ of $G$
\STATE Compute an \mno\ for $G$
\STATE $(C,T) \leftarrow \textsc{MDS(G)}$.
\FOR {$i$ from 1 to $\gamma(G)$}
\STATE Let $x_i$ be the least vertex of $D \cap N[t_i]$ \label{line-xi}
\IF{$x_ic_i\in E$} 
\STATE $x_i \to{{\sf TS}} c_i$ \label{line-if}
\ELSE
\STATE $y_i \leftarrow mn(x_i)$
\STATE $x_i \to{{\sf TS}} y_i$
\STATE $y_i \to{{\sf TS}} c_i$
\ENDIF
\ENDFOR
\end{algorithmic}
\end{algorithm}

\begin{lemma} \label{lemma:reconf-dually}
Given a dually chordal graph $G=(V,E)$ and a dominating set $D$, \textsc{Dually-Chordal-Reconf} modifies $D$ with respect to the token sliding model in such a way that $C \subseteq D$ in $O(|V|)$ time, where $C$ is the \emph{canonical triggered dominating set} computed by the MDS algorithm.
\end{lemma}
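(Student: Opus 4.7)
The proof proceeds by induction on the iteration index $i$, maintaining the invariant (I1) that before iteration $i$, $D$ is a dominating multiset of $G$ with $\{c_1,\ldots,c_{i-1}\}\subseteq D$. I also use the static fact (I2), from property~(\ref{item2}) of triggered dominating sets, that every vertex of \mno-index strictly less than that of $t_i$ lies in $N[\{c_1,\ldots,c_{i-1}\}]$; equivalently, $c_k\notin N[t_i]$ for $k<i$.

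Well-definedness and validity of slides. By (I1), $D\cap N[t_i]\neq\emptyset$, so $x_i$ is well-defined; by (I2), $x_i\neq c_k$ for $k<i$, so iteration $i$ does not disturb the previously placed canonical tokens. The \texttt{if} branch slide is along the edge $x_ic_i$. In the \texttt{else} branch, $y_i=mn(x_i)\in N[x_i]$, so the first slide is along an edge. For the second slide, write $x_i=v_\ell$: the failure of the \texttt{if} branch combined with $x_i\in N[t_i]$ forces $\ell$ to be strictly less than the \mno-index of $t_i$, so $t_i\in N_{G_\ell}[x_i]$, and the maximum neighbor property of $y_i$ in $G_\ell$ then gives $c_i\in N_{G_\ell}[t_i]\subseteq N_{G_\ell}[y_i]\subseteq N[y_i]$.

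Preservation of domination is the main technical step. Let $t_i=v_j$. Any vertex of \mno-index less than $j$ is in $N[\{c_1,\ldots,c_{i-1}\}]$ by (I2), and the $c_k$'s are untouched during iteration $i$, so these vertices remain dominated throughout. For $u\in G_j$ that could lose domination, we have $u\in N[x_i]$ and must verify $u\in N[c_i]$ after iteration $i$ (and $u\in N[y_i]$ in the \texttt{else} case for the intermediate multiset). In the \texttt{if} case, $x_i\in G_j$, and the maximum neighbor property yields $u\in N_{G_j}[x_i]\subseteq N_{G_j}[c_i]$. In the \texttt{else} case, the maximum neighbor property of $y_i=mn(x_i)$ in $G_\ell$ gives $u\in N_{G_\ell}[x_i]\subseteq N_{G_\ell}[y_i]$ for the intermediate multiset; then, using that $y_i\in G_j$ (which relies on the specific choice of maximum neighbors in the \mno), the maximum neighbor property of $c_i=mn(t_i)$ in $G_j$ gives $u\in N_{G_j}[y_i]\subseteq N_{G_j}[c_i]$.

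After $\gamma(G)$ iterations, invariant (I1) yields $C\subseteq D$. Each iteration produces at most two slides and does $O(1)$ work, given the \mno\ and \textsc{MDS} outputs; since $\gamma(G)\le|V|$, the main-loop runtime is $O(|V|)$. The main obstacle I anticipate is the \texttt{else} case of domination preservation: one has to chain two maximum neighbor properties in different subgraphs $G_\ell$ and $G_j$, and in particular justify that $y_i$ lies in $G_j$, which draws on how the \mno\ algorithm selects its maximum neighbors.
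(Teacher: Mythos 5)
Your proof follows the same route as the paper's: induction on $i$, checking slide validity and domination preservation separately, with the same case split on whether $x_ic_i$ is an edge. The obstacle you flag at the end is, however, a genuine gap and not a cosmetic one. In the \texttt{if} case you assert $x_i\in G_j$, but the branch condition is only that $x_ic_i$ is an edge, which does not force the index of $x_i$ to be at least that of $t_i$ (an early vertex may well be adjacent to both $t_i$ and $c_i$). In the \texttt{else} case you assert $y_i\in G_j$, but nothing in the definition of an \mno\ supports this: the ordering only guarantees $y_i=mn(x_i)\in N_{G_\ell}[x_i]$, i.e.\ that $y_i$ comes no earlier than $x_i$, and $y_i$ may have index strictly between those of $x_i$ and $t_i$. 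In either situation the maximum-neighbor property of $c_i=mn(t_i)$ quantifies only over vertices of $N_{G_j}[t_i]$, so it cannot be applied to $x_i$ or $y_i$, and the chain $u\in N_{G_j}[\cdot]\subseteq N_{G_j}[c_i]$ collapses. (The paper's own write-up leans on the same two containments $N_{G'}[x_i]\subseteq N_{G'}[c_i]$ and $N_{G'}(y_i)\subseteq N_{G'}[c_i]$ without justifying that $x_i$, respectively $y_i$, lies in $G'$, so you have located a soft spot rather than invented one.)

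The conclusion $u\in N[c_i]$ is nevertheless true, and the repair is to iterate the maximum-neighbor map instead of applying it once. Set $z_0=x_i$ and $z_{r+1}=mn(z_r)$; since the \mno\ used here satisfies $mn(v)\neq v$ for every non-final vertex, this sequence is strictly increasing. As long as the index of $z_r$ is below that of $t_i$, both $t_i$ and the endangered vertex $u$ (whose index is at least that of $t_i$) lie in $N_{G_{z_r}}[z_r]$, hence also in $N[z_{r+1}]$ by the maximum-neighbor property of $z_{r+1}$ in $G_{z_r}$. The first term $z_{r^*}$ whose index reaches that of $t_i$ therefore satisfies $z_{r^*}\in N_{G_j}[t_i]$ and $u\in N_{G_j}[z_{r^*}]$, and now the maximum-neighbor property of $c_i$ legitimately gives $u\in N_{G_j}[z_{r^*}]\subseteq N_{G_j}[c_i]$. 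Note that this iteration lives only in the proof: the algorithm still performs at most two slides, since $c_i\in N_{G_\ell}[t_i]\subseteq N_{G_\ell}[y_i]$ already guarantees the edge $y_ic_i$, exactly as you argue.
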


\begin{proof}
Let $T=(t_1, t_2, \ldots, t_{\gamma})$ and $C = \{c_1, c_2, \ldots, c_{\gamma}\}$ be the output of algorithm \textsc{MDS}, with $c_i = mn(t_i)$. 
We denote by $C_i = \{c_1, \ldots, c_i\}$ the set of $i$ first vertices of $C$ according to the \mno. 

In order to prove the correctness of the algorithm, we need to prove that the two following constraints are satisfied:
\begin{enumerate}[(i)]
    \item each move is valid with respect to the token sliding model; \label{propTS-ok}
    \item every intermediate set is a dominating set of $G$. (Note that this ensures the existence of the $x_i$ of line~\ref{line-xi}.) \label{propDomSet-ok}
\end{enumerate}

We prove these two properties by induction on the index $i$ ($0< i \le \gamma$).
For some $i> 0$, assume that the algorithm reconfigured properly $D$ into $D_{i-1}=(D \setminus \{x_1, \ldots, x_{i-1}\}) \cup \{c_1, \ldots, c_{i-1}\}$ .
We explain how to extend this up to rank $i$. By definition, $t_{i}$ is the least vertex which is not dominated by $N[C_{i-1}] = N[\{c_1, \ldots, c_{i-1}\}]$. Let $x_{i}$ be the least vertex dominating $t_{i}$ in $D$. 
Observe that $x_{i} \notin \{c_1, \ldots, c_{i-1}\}$ since $t_{i}$ is the triggering vertex of $c_{i}$. To simplify notation, we denote by $G'$ the subgraph of $G$ induced by vertices larger than $t_i$ in the \mno\ (i.e., the subgraph $G_j$ where $j$ is the index of $t_i$ in the \mno). Note that since $C_{i-1} \subset D_{i-1}$, all vertices in $G\setminus G'$ are dominated. 
 We consider two cases:

\begin{case}
$x_i$ is adjacent to $c_i$.  \normalfont
Observe first that this case occurs whenever $x_i \ge t_i$ in the \mno\ (where $x_i \in N_{G'}[t_i] \subseteq N_{G'}[c_{i}]$). 
In that case, the algorithm executes line~\ref{line-if} and the token sliding constraint~\ref{propTS-ok} is satisfied.
Now, since $c_i = mn(t_i)$ and $x_i$ is adjacent to $t_i$, $N_{G'}[x_i] \subseteq N_{G'}[c_i]$, and constraint~\ref{propDomSet-ok} is also satisfied. The conclusion follows from the fact that all vertices in $G\setminus G'$ are dominated.
\end{case}

\begin{case}
$x_i$ is not adjacent to $c_i$.  \normalfont
This is possibly the case when $x_i <t_i$ in the \mno. The algorithm then first reconfigures $x_i$ into its maximum neighbor $y_i$, which is adjacent to $x_i$ and dominates all neighbors of $x_i$ that might not be dominated yet by $\{c_1, \ldots, c_{i-1}\}$, satisfying constraint~\ref{propDomSet-ok}. 
Moreover, $x_i$ is adjacent to $t_i$ and $x_i < t_i$ in the \mno, so $y_i$, as the maximum neighbor of $x_i$, must be adjacent to  $t_i$ and all its neighbors in $G'$, among which there is $c_i$. So the next move to $c_i$ satisfies the token sliding constraint~\ref{propTS-ok}. 
Also, since $y_i$ is adjacent to $t_i$ and $c_i$ is the maximum neighbor of $t_i$, $N_{G'}(y_i) \subseteq N_{G'}[c_i]$, and constraint~\ref{propDomSet-ok} is also satisfied. This concludes the proof.
\qedhere
\end{case}
\end{proof}

\begin{theorem} \label{thm:dually}
\DSR\ can be solved in quadratic time on dually chordal graphs.
\end{theorem}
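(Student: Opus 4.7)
The plan is to reduce the reachability question to the existence of a common canonical form, building on Algorithms~\textsc{MDS} and~\textsc{Dually-Chordal-Reconf}. For a size-$k$ dominating set on a connected dually chordal graph $G$, I would define the canonical form $C^k$ as the canonical triggered dominating set $C = \{c_1, \ldots, c_{\gamma(G)}\}$ output by \textsc{MDS}, augmented with $k - \gamma(G)$ extra tokens all superposed on $c_1$. The reachability question then reduces to checking that both $D_\source$ and $D_\target$ can be reconfigured into $C^k$; by reversibility of token sliding this is sufficient, and I would argue it is always achievable in the connected case, so the answer is unconditionally \textsf{yes} in that regime.

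The heart of the argument is showing that every size-$k$ dominating set $D$ can be reconfigured to $C^k$ in a quadratic number of moves. I would proceed in two phases. Phase~1 runs \textsc{Dually-Chordal-Reconf} as in Lemma~\ref{lemma:reconf-dually}: at step $i$ it slides a token dominating $t_i$ (possibly via $mn(x_i)$) onto $c_i$, leaving the $k - \gamma(G)$ surplus tokens in place. The case analysis of Lemma~\ref{lemma:reconf-dually} transfers essentially unchanged because extra tokens can only strengthen the domination invariant. After Phase~1 we have $C \subseteq D$. Phase~2 then slides the extra tokens one at a time to $c_1$ along a path in $G$; since $C$ alone dominates $G$, moving any extra token never breaks domination, so every edge-slide is legitimate, and connectedness provides a path of length at most $|V|-1$.

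The main obstacle I anticipate is the disconnected case, since tokens cannot migrate between components. My plan here is to apply the argument component-wise: decompose $G$ into its connected components $G_1, \ldots, G_p$, compute a canonical form in each $G_j$, and set the canonical form of $G$ to be their multiset union. The reachability answer is then \textsf{yes} if and only if $D_\source$ and $D_\target$ contain the same number of tokens in each component, which can be verified in linear time; otherwise the answer is \textsf{no}.

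For the quadratic time bound, I would note that \textsc{MDS} runs in $O(|V|+|E|)$, Phase~1 performs $O(|V|)$ slides (one per canonical vertex), and Phase~2 involves at most $|V|$ extra tokens each traversing a path of length $O(|V|)$, yielding $O(|V|^2)$ slides overall and thus the claimed quadratic running time.
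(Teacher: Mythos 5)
Your proposal is correct and follows essentially the same route as the paper: compute the canonical triggered dominating set via \textsc{MDS}, pull both $D_\source$ and $D_\target$ onto it via Lemma~\ref{lemma:reconf-dually}, gather the surplus tokens at a fixed vertex (you use $c_1$, the paper uses a minimum-eccentricity vertex, which changes nothing in the worst-case bound), handle disconnected graphs component-wise by matching token counts, and conclude by reversibility. The quadratic accounting matches the paper's as well.
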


\begin{proof}
Let $G=(V,E)$ be a dually chordal graph and $D_\source, D_\target$ be two dominating sets of $G$ of size $k$ (i.e., $(G,D_\source,D_\target)$ is an instance of the \DSR ~problem). Assume that $G$ is connected (otherwise we proceed independently for each connected component, checking first that the number of tokens on each component fit). We explain how to reconfigure $D_\source$ into $D_\target$ in at most quadratic time.

First, we compute in linear time the canonical dominating set $C$ of $G$ with the algorithm \textsc{MDS}. By Lemma \ref{lemma:reconf-dually}, one can transform $D_\source$ and $D_\target$ in such a way that both contain $C$. This can be done in linear time (with respect to the order of $G$) since we move at most $\gamma(G)$ tokens and each move requires at most two steps.  If $k = \gamma(G)$, we are done. Otherwise, choose a vertex $v$ of minimum eccentricity and move all the remaining tokens by a shortest path to $v$. Therefore, the total time complexity is $O(|V|+|E|) + O(|V|) + O((k-\gamma(G)) \cdot \epsilon(v))$, which is at most quadratic (when $k=\Omega(n)$).
\end{proof}

Observe that when $k$ is close to $\gamma$, the algorithm is linear. However, when the number of extra tokens is large (i.e., is linear in $n$), the quadratic overhead may be necessary. Indeed, consider a path on $n$ vertices $P_n$. The minimum eccentricity of $P_n$ is that of the middle vertex $v$ which is $\lfloor n/2 \rfloor$. Therefore, if all the extra tokens are on an extremity of the path, the time needed to move all of them to $v$ is quadratic. Since a path is a dually chordal graph, the conclusion follows.

\section{Open questions}
In all of our polynomial results presented in Section \ref{sec:poly}, computing a minimum dominating set can be done in polynomial or even linear time on the graph classes considered. Therefore, a challenging question is the following: does there exist a graph class for which computing a minimum dominating set is NP-complete but \DSR\ can be solved in polynomial time? As a result of Theorem \ref{thm:dually}, we get that \DSR\ is polynomial-time solvable on interval graphs. Recall that a circle graph is the intersection graph set of chords of a circle. The {\sc Dominating Set Problem} has been shown to be NP-complete on this graph class \cite{KEIL199351}. Hence, we ask the following question:

\begin{question}
What is the complexity of \DSR\ on circle graphs?
\end{question}

If the answer is positive, note that it would generalize our result on cographs. A circular-arc graph is the intersection graph of a set of arcs on the circle. Even if computing a minimum dominating set can be computed in linear time on circular-arc graphs \cite{Chang98efficientalgorithms}, we are interested in the complexity of the reconfiguration version under token sliding. More precisely, we ask for the following:

\begin{question}
Is \DSR\ polynomial-time solvable on circular-arc graphs?
\end{question}

Besides, we found polynomial-time algorithms for cographs and dually chordal graphs but the underlying reconfiguration sequence is most likely not optimal. Indeed, it may be possible that the shortest path in $\mathcal{R}_k(G)$ between the two given solutions does not go through the canonical dominating set. Therefore, can we bound the diameter of the reconfiguration graph? In other words, what is the maximum length of a shortest reconfiguration sequence between any pair of dominating sets? Moreover, what is the complexity of finding the optimal solution, i.e., the shortest reconfiguration sequence between two dominating sets on cographs or dually chordal graphs? Is it polynomial-time solvable, as the reachability variant studied here? Or does it become harder?

\paragraph{Acknowledgements}
We thank the anonymous referees for several remarks which helped improve the presentation of this paper.
This work was supported by ANR project GraphEn (ANR-15-CE40-0009).

\bibliographystyle{plain}
\bibliography{bibliography}

\end{document}